\theoremstyle{plain}
\newtheorem{proposition}{Proposition}
\newcommand{\rev}[1]{{\color{black}#1}}
\DeclareMathOperator*{\e}{\mathrm{e}}
\newcommand\given[1][]{\,#1\vert\,}
\newcommand\mgiven[1][]{\,#1\middle\vert\,}
\newcommand\copyrighttext{%
  \footnotesize Copyright \copyright~2018 IEEE. Personal use of this material is permitted. Permission from IEEE must be obtained for all other uses.
}
\newcommand\copyrightnotice{%
\begin{tikzpicture}[remember picture,overlay]
\node[anchor=north,yshift=0pt] at (current page.north) {\fbox{\parbox{\dimexpr\textwidth-\fboxsep-\fboxrule\relax}{\copyrighttext}}};
\end{tikzpicture}%
}
\begin{document}
%
\title{Adaptive pooling operators for weakly labeled sound event detection}
%
%
%

\author{
      Brian McFee$^{1, 3}$,
      Justin Salamon$^{1, 2}$,
      Juan Pablo Bello$^{1}$  \IEEEmembership{Senior Member,~IEEE}
\thanks{$^1$ Music and Audio Research Laboratory}%
\thanks{$^2$ Center for Urban Science and Progress}%
\thanks{$^3$ Center for Data Science}%
\thanks{New York University, New York, USA}%
}

\markboth{IEEE Transactions on Audio, Speech and Language Processing, In press, 2018}{}
\maketitle

\copyrightnotice

\begin{abstract}
Sound event detection (SED) methods are tasked with labeling segments of audio recordings by the presence of active sound sources.
SED is typically posed as a supervised machine learning problem, requiring strong annotations for the presence or absence of each sound source at every time instant within the recording.
However, strong annotations of this type are both labor- and cost-intensive for human annotators to produce, which limits the practical scalability of SED methods.

In this work, we treat SED as a multiple instance learning (MIL) problem, where training labels are static over a short excerpt, indicating the presence or absence of sound sources but not their temporal locality.
The models, however, must still produce temporally dynamic predictions, which must be aggregated (pooled) when comparing against static labels during training.
To facilitate this aggregation, we develop a family of adaptive pooling operators---referred to as auto-pool---which smoothly interpolate between common pooling operators, such as min-, max-, or average-pooling, and automatically adapt to the characteristics of the sound sources in question.
We evaluate the proposed pooling operators on three datasets, and demonstrate that in each case, the proposed methods outperform non-adaptive pooling operators for static prediction, and nearly match the performance of models trained with strong, dynamic annotations.
The proposed method is evaluated in conjunction with convolutional neural networks, but can be readily applied to any differentiable model for time-series label prediction.
\rev{%
While this article focuses on SED applications, the proposed methods are general, and could be applied widely to MIL problems in any domain.}
\end{abstract}

\begin{IEEEkeywords}
Sound event detection, machine learning, multiple instance learning, deep learning
\end{IEEEkeywords}

%
\IEEEpeerreviewmaketitle%

\section{Introduction}
\IEEEPARstart{S}{ound} event detection (SED) is the task of automatically identifying the occurrence of specific sounds in continuous audio recordings.
Given a target set of sound sources of interest, \rev{the goal} is to return the start time, end time, and label (the class) of every sound event in the target set.
SED is a key component in a number of technologies and applications emerging from the recent advances in machine learning and Internet of Things (IoT) technology such as noise monitoring for smart cities~\cite{Bello:SONYC:CACM:18}, bioacoustic species and migration monitoring~\cite{Stowell:AEDoverlappinh:WASPAA:15,Salamon:FlightCalls:PLOSONE:16,Lostanlen:BirdVoxFullNight:ICASSP:18}, self-driving cars~\cite{DCASE17:SmartCars:Webpage:17}, surveillance~\cite{radhakrishnan:AudioSurveillance:WASPAA:05,Crocco:AudioSuerveillance:CS:16}, healthcare~\cite{Goetze:HealthSED:JCSE:12}, and large-scale multimedia indexing~\cite{Hershey:LargeAudioCNN:ICASSP:17}.

Modern SED systems are typically implemented by supervised machine learning algorithms, which are used to learn the parameters of a function to map a sequence of audio data to a sequence of event labels.
Because SED systems are required to produce \emph{dynamic} (time-varying) label estimates at each instant within a recording, they are often trained from \emph{strongly labeled training data}, where the presence or absence of each source at each instant is known.
While strongly labeled data is ideal for model development and evaluation, it is also costly to acquire.
As SED systems adopt data-intensive approaches---such as convolutional or recurrent neural networks---the availability and cost of strongly annotated data become serious impediments to system development.

If we are to accurately evaluate the dynamic performance of \rev{SED systems}, strongly labeled data is clearly necessary, and it is natural to assume that the same should hold for model development.
However, if one has access to a larger pool of data that has only been \emph{weakly labeled} at a coarse time resolution (\emph{e.g.}, 10~second clips), it may be possible to learn a high-quality, dynamic predictor with lower annotation costs.
The key to leveraging this kind of weakly labeled data lies in the means by which dynamic predictions are aggregated or \emph{pooled} across time to form static predictions.
There are standard approaches to aggregating predictions, such as \emph{max-} or \emph{mean-}pooling, which can be difficult to optimize (in the case of $\max$) or require strict assumptions about the characteristics of the data which may not hold in practice, \emph{e.g.}, mean-pooling assumes that event activation must occupy the majority of a labeled observation window.
Making effective use of weakly labeled data can therefore require substantial engineering and algorithm design effort.

\subsection{Our contributions}

In this article, we develop a general family of adaptive pooling operators---collectively referred to as \emph{auto-pool}---which generalize and interpolate between standard operators such as $\max$, $\text{mean}$, or $\min$.
The proposed methods are designed to be jointly learned with dynamic prediction models (\emph{e.g.}, convolutional networks), allowing dynamic predictors to be trained from weakly annotated data, and require minimal assumptions about the label characteristics.
We evaluate the proposed methods on three multi-label, sound event detection datasets, which exhibit differing characteristics of label sparsity and duration.
Our empirical results show that the proposed methods outperform standard, non-adaptive pooling operators, and the resulting models achieve comparable accuracy to models trained from strongly labeled data.

\section{Related work}
\subsection{Sound event detection}
Sound event detection (SED) has seen a dramatic increase in interest from the research community over the past decade, as evidenced by the growing popularity and participation in the DCASE challenge~\cite{mesaros2017dcase} and the emergence of domain-specific SED systems, \emph{e.g.},~for bioacoustic SED~\cite{Stowell:BioacousticCASA:Chapter:18}.
Early approaches relied on standard features (such as Mel-frequency cepstral coefficients) combined with standard machine learning algorithms, such as support vector machines~\cite{Temko:AED:PhD:07,Foggia:SED:PRL:15,Elizalde:SED:DCASE:16} or Gaussian mixture models (optionally with temporal smoothing)~\cite{cai:KeyAudioEffects:TASLP:06,Mesaros:AED:EUSIPCO:10,heittola:ContextEventDetection:EURASIP:13,Vuegen:AEDGMM:WASPAA:13}.
Other strategies include spectral decomposition methods and source separation models~\cite{Benetos:AED:ICASSP:16,Benetos:SEDPLCA:TASLP:17,heittola2013supervised,cotton:SpectroTemporal:WASPAA:11,Dikmen:SEDNMF:WASPAA:13,Gemmeke:AEDNMF:WASPAA:13,Mesaros:SEDNMF:ICASSP:15,Komatsu:AEDNMF:ICASSP:16}.
The most recent research on SED is dominated by deep (fully connected) neural networks~\cite{Cakir:SEDDNN:IJCNN:15}, convolutional networks~\cite{Cakir:FilterbankSED:IJCNN:16,Jeong:AEDCNN:DCASE:17,Lostanlen:BirdVoxFullNight:ICASSP:18}, recurrent networks~\cite{Parascandolo:RNNSED:ICASSP:16,Rui:SEDRNN:DCASE:17}, or convolutional-recurrent networks~\cite{Cakir:SEDCRNN:TASLP:17,Adavanne:SpatialSEDCRNN:ICASSP:17}.

The aforementioned approaches rely on strongly labeled data, which as discussed in the introduction, limits their practical \rev{applicability.} 
While this limitation can be overcome in part through data synthesis~\cite{salamon2017scaper}, the problem has led researchers to investigate models for SED that can be trained from weak (static) labels.
Interest in this problem formulation spiked with the release of AudioSet~\cite{Gemmeke:AudioSet:ICASSP:17}, which contains approximately 2 million 10-second YouTube clips with weak audio labels, and the DCASE~2017 challenge~\cite{mesaros2017dcase}.
One of the DCASE~2017 tasks (Task 4) was based on a subset of AudioSet, and the problem was to develop models that can be trained on weak labels but produce strong (\emph{i.e.}, dynamic, time-varying) labels.
Many of the subsequently published papers addressing SED using weakly labeled data (\cref{sec:sed_mil}) formulate the problem in the multiple instance learning (MIL) framework.

\subsection{Multiple instance learning}
Multiple instance learning (MIL) was proposed in its modern form by Dietterich et~al.~\cite{dietterich1997solving} as a supervised learning problem where a single binary class label is applied to a set (\emph{bag}) of related examples (instances) in the training set.
MIL problems naturally arise in a variety of application domains where precise labeling can be expensive, such as object recognition in computer vision.
A label may be applied to an image indicating the presence of an object, while the ``instances'' to be classified are small patches within the image.
Similarly, for SED, it may be more cost-effective to label a relatively long clip for the presence of an event, rather than each individual frame.

The general MIL formulation has been broadly applied within computer vision~\cite{zhang2006multiple,babenko2011multiple,hsu2014augmented}, it has been relatively less common in audio applications.
Mandel and Ellis~\cite{mandel2008multiple} compared two support vector machine-based MIL algorithms~\cite{andrews2003support,chen2006miles} for classifying 10-second music excerpts (the instances) for which labels had been generated at the levels of track, album, or artist.
Their target vocabulary included a mixture of genre, style, and instrumentation tags, and they found that the best-performing method was the MI-SVM algorithm~\cite{andrews2003support}, but that it was comparable to a naive baseline in which aggregated training labels were propagated to all constituent instances prior to training.
Relatedly, Wu~et~al.\ developed a hierarchical generative model for music emotion recognition~\cite{wu2014music}.
In their model, song labels are modeled as generating multiple instances of \emph{segments}, which in turn each generate multiple \emph{sentences} (instances) which are jointly represented by text (lyrics) and acoustic features.
While their generative model is trained on weakly labeled data, it does not provide a direct mechanism for inferring instance-level labels.

In other related work, Briggs~et~al.~\cite{briggs2012acoustic} compared several previously developed MIL algorithms for detecting (multiple) bird species from short (10--15s) audio excerpts.
Their results demonstrated that k-nearest-neighbor~\cite{zhang2005k} and clustering~\cite{zhou2007multi} approaches both perform well at excerpt-level prediction, but they did not report evaluations at the level of instances (time-frequency patches).
For comparison purposes, we evaluate the methods proposed here on both static and dynamic prediction.

\subsection{Sound event detection using weakly labeled data}
\label{sec:sed_mil}
When reviewing approaches for weakly labeled SED, we can group approaches by two key features: the model used to produce dynamic features (or predictions), \emph{i.e.}, an instance-level representation, and the approach used to aggregate instance-level features or predictions to a bag-level (static) prediction.
\rev{Note that for SED,} instances typically correspond to audio frames or short chunks.
In terms of modeling, while approaches based on GMM~\cite{Kumar:WeakSED:IJCNN:17} and SVM~\cite{Kumar:WeakAED:ACMMM:16} have been proposed, the vast majority are based on deep neural networks including DNN~\cite{Kong:WeakSED:ICASSP:17}, CNN~\cite{Su:WeakAED:ICASSP:17,Chou:FrameCNN:DCASE:17,Salamon:MILSED:DCASE:17,Kumar:WeakSED:ICASSP:18,Xu:WeakSED:ICASSP:18}, RNN~\cite{Wang:WeakSED:ICASSP:17} and CRNN~\cite{Adavanne:WeakSED:DCASE:17}.
Some approaches propagate the bag-level label to all instances and train against these directly~\cite{Chou:FrameCNN:DCASE:17,Adavanne:WeakSED:DCASE:17}, which can introduce instance-level label noise.
Other approaches are based on source separation, and obtain dynamic labels by post-processing the separated sources (\emph{e.g.}, by computing the frame-wise energy of each separated source)~\cite{Kong:WeakSED:ICASSP:18,Sobieraj:WeakSED:ICASSP:18}.

However, the majority of approaches aggregate instance-level representations over time to produce a bag-level prediction.
Given the standard MIL formulation, it is understandable that most approaches rely on pooling or customized loss functions that make use of the $\max$ operator~\cite{Kumar:WeakAED:ACMMM:16,Su:WeakAED:ICASSP:17,Kumar:WeakSED:IJCNN:17}, though variants including (a precursor to this work) soft-max pooling~\cite{Salamon:MILSED:DCASE:17}, and mean pooling~\cite{Kumar:WeakSED:ICASSP:18} have been proposed.
As shall be discussed in \Cref{sec:methods}, max-pooling causes a number of issues that limit its efficacy as a pooling strategy for MIL.\@

\rev{\subsection{Attention and differentiable pooling}}

Attention mechanisms~\cite{bahdanau2014neural} have been recently developed as a way to restrict the dependence of an output prediction to a subset of the input.
Typically, attention mechanisms are applied to \emph{structured prediction} problems, such as machine translation or automatic speech recognition, where the output is a sequence (\emph{e.g.}, predicted translation text) has some regular structure that may be exploited by the model architecture, which is often a recurrent neural network.
While the basic idea of attention for MIL is appealing, the training labels in MIL are typically \emph{unstructured}: \emph{e.g.}, a single label that applies to an entire sequence.
However, the model must still produce structured predictions, and it is not directly obvious how to apply standard attention mechanisms.

Convolutional (feed-forward) attention~\cite{raffel2015feed} is a closer fit to the MIL setting, as the attention mechanism is used to summarize a structured input by a fixed-length \emph{context vector} $c$ as a weighted average $c = \sum_t e_t h_t$ of instance representations $\{h_t\}$, from which the output is predicted as $\hat{y} = {g( c )}$.
\rev{Note that the intermediate representations $h_t$ do not generally constitute instance predictions.
    Because $g$ is usually non-linear (and non-convex), there is no direct relationship between the attention-aggregated output ${g( c )} = g(\sum_t e_t h_t)$, the weighted average of $g$ applied to instances $\sum_t e_t g(h_t)$, and instance-wise outputs $g(h_t)$.
As a result, while optimizing $g( c )$ may provide a good bag-level model, it does not directly provide an instance-level model as required by MIL.}

Recently, attention-based models have been proposed which use class likelihoods as intermediate representations, along with an identity mapping for $g$~\cite{Kong:WeakSED:ICASSP:17,Xu:WeakSED:ICASSP:18}.
The methods we develop in this article are similar in spirit, but with a more constrained and interpretable attention mechanism that relates directly to the instance-level predictions and the MIL problem formulation.
Moreover, the proposed methods introduce only a single additional parameter for each class, which can be directly interpreted as interpolating between different standard pooling operators, as described below.

\rev{%
Aside from attention models, similar techniques have been used to provide differentiable pooling operators for MIL.\@
Most similar to the methods we propose is that of Hsu et~al.~\cite{hsu2014augmented}, which uses the smooth $\log \sum \exp$ approximation to the $\max$ operator for MIL applications in computer vision.
Hsu~et~al.\ introduce a hyper-parameter to control the sharpness of the approximation, but it is fixed a priori, and unlike the methods proposed here, the aggregation does not adapt automatically.
Moreover, because $\log\sum\exp$ aggregation is non-linear, it exhibits similar difficulty in recovering instance-wise predictions as the attention-based approach described above.

Finally, Zeiler and Fergus~\cite{zeiler2012differentiable} developed a general formulation that adaptively interpolates between different standard pooling operators.
Although this approach has been applied to audio problems~\cite{swietojanski2015differentiable}, its use has been limited to pooling of internal feature representations in convolutional networks, and it has not been used in MIL applications.
The methods we develop in this article are conceptually simpler, and more limited in scope to directly address the difficulties of aggregating predictions in MIL problems.

}

\section{Methods}
\label{sec:methods}
In this section, we describe the multiple instance learning problem in general, and illustrate short-comings of standard pooling operators when applied to the MIL context.
We then develop a family of adaptive pooling operators to reduce dynamic label predictions to static predictions during training.
\rev{For ease of exposition, we first derive the methods for single-label binary classification problems, followed by the generalization to multi-label settings.}

\subsection{Multiple instance learning}
In the multiple instance learning (MIL) problem formulation, training data are provided as labeled sets (\emph{bags}) of examples ${(X_i, Y_i)}_{i=1}^n$ where $X_i = \{x_1, x_2, \cdots\} \subset \mathcal{X}$ contains multiple \emph{instances} $x_j$, and $Y_i \in \{0, 1\}$ is a single label for the set $X_i$~\cite{dietterich1997solving}, and $\mathcal{X}$ and $\mathcal{Y}$ denote the feature and label spaces.
The label convention is that $Y_i = 1$ if any instance $x \in X_i$ is positive, and $Y_i = 0$ if all instances are negative.
The goal is to use this weakly labeled data to learn an instance classifier $h : \mathcal{X} \rightarrow \mathcal{Y}$.

While MIL can be applied to a variety of learning algorithms (\emph{e.g.}, support vector machines or nearest neighbor classifiers), in this work we focus on deep neural networks.
The classifiers under consideration here take the form of a thresholded likelihood $\hat{p}(Y\given x)$, \emph{e.g.},
\begin{equation}
h(x) = \begin{cases}
			1 & \hat{p}(Y \given x) \geq 0.5\\
			0 & \text{otherwise}
		\end{cases}.
\end{equation}
In this formulation, the predicted label for a bag is the maximum over instance-wise predictions.
Equivalently, a likelihood for the bag-label can be induced from the instance likelihoods by defining the bag-level likelihood as
\begin{equation}
	\hat{P}(Y \given X) = \max_{x \in X} \hat{p}(Y\given x),\label{eq:milprob}
\end{equation}
which results in the bag prediction rule
\begin{equation}
\overline{h}(X) = \begin{cases}
			1 & \hat{P}(Y\given X) \geq 0.5\\
			0 & \text{otherwise}
        \end{cases}.\label{eq:bagpredict}
\end{equation}
This prediction rule is depicted schematically in \Cref{fig:milgrads} (left), where the bag-level prediction depends only on the maximum of its instance-level predictions.
\begin{figure}
\centering
\includegraphics[width=\columnwidth]{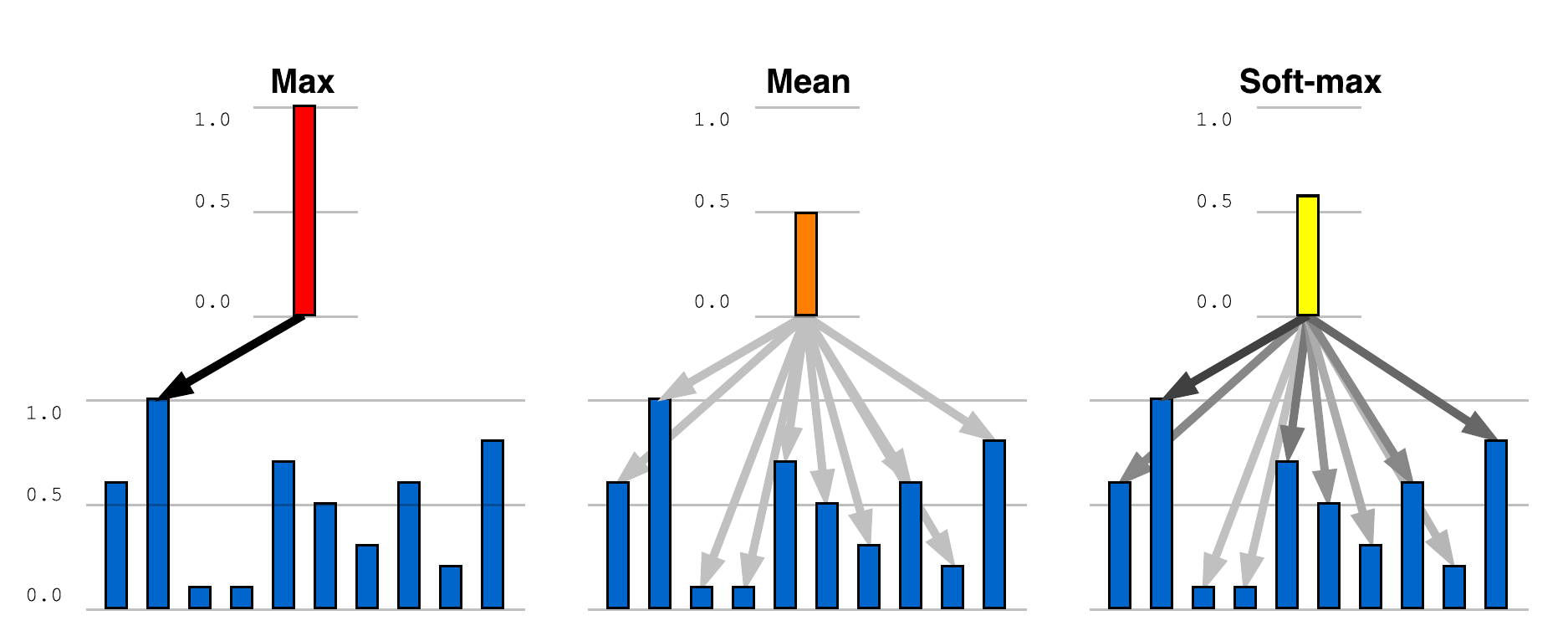}
\caption{Pooling operators propagate gradient information in proportion to the responsibilities they assign to instance-level predictions, indicated here by the darkness of arrows.
Left: $\max$ assigns all responsibility to the largest instance; middle: mean assigns equal responsibility to all instances; right: soft-max (\cref{eq:smp}) assigns greater responsibility to large instances.\label{fig:milgrads}}
\end{figure}

During training, the objective is to maximize the likelihood of observed labeled bags, \emph{e.g.}, by minimizing the binary cross-entropy over the model parameters $\theta$:
\begin{equation}
\min_\theta \frac{1}{n}\sum_{i=1}^n -Y_i \log \hat{P}(Y\given X_i) - (1 - Y_i)\log\left(1 - \hat{P}(Y\given X_i)\right).\label{eq:milopt}
\end{equation}

\subsection{Max-pooling}
Typically, \cref{eq:milopt} is optimized by some form of gradient descent, which requires propagating gradients through the $\max$ operator in \cref{eq:milprob} via the chain rule.
The $\max$ operator is not itself differentiable, so sub-gradient descent must be used instead.
The \emph{sub-differential set} of the $\max$ operator applied to inputs $\{z_i\}\subset \mathbb{R}$ is the set of all convex combinations of its maximizers's \emph{sub-gradients} (assuming each input is sub-differentiable):
\begin{equation}
    \partial \max\left\{z_i \right\} = \text{Conv}\left\{g \mgiven g \in \partial z_i \;\wedge\; z_i = \max_j z_j \right\}
\end{equation}
where $\text{Conv}(S)$ denotes the convex hull of a set $S$:
\begin{equation*}
\text{Conv}(S) = \left\{\sum_{x \in S} \mu_x x \mgiven \sum_{x \in S}\mu_x = 1\;\wedge \; \forall_x \mu_x \geq 0~\right\}.
\end{equation*}
Any element of $\partial\max$ can be used in place of a gradient, though most implementations select a single maximizer at random; often, the maximizer is unique, so the distinction is unimportant.
A sub-gradient of $\max$ can be thus viewed as a weighted average of all inputs, subject to the constraint that non-maximizing inputs must have weight 0.

\rev{When applying the chain rule to $\partial\max$, the sub-gradient of the objective function with respect to non-maximizing instances is 0, and those instances therefore do not contribute when updating the parameters $\theta$.}
This is particularly problematic early in training, where the instance-wise predictions are essentially random.
Parameter updates then depend entirely upon single, randomly selected instances (as depicted in \Cref{fig:milgrads}, left).
As a result, max-pooling for MIL can be sensitive to initialization, generally unstable, and \rev{difficult to deploy}.

\subsection{Soft-max pooling}
To ameliorate the issues highlighted above, we proposed in previous work~\cite{Salamon:MILSED:DCASE:17} to replace the $\max$ operator in \cref{eq:milprob} by the \emph{soft-max weighted average}:
\begin{equation}
	\hat{P}_s(Y\given X) = \sum_{x \in X} \hat{p}(Y\given x) \left(\frac{\exp \hat{p}(Y\given x)}{\displaystyle\sum_{z\in X}\exp \hat{p}(Y\given z)}\right).\label{eq:smp}
\end{equation}
This operator behaves similarly to the $\max$ operator, in that $\hat{P}_s$ is large if any of its inputs $\hat{p}(Y\given x)$ are large, and small if all of its inputs are small.
However, it is continuously differentiable, and assigns responsibility to each instance $x$ so that the entire bag contributes to the gradient calculation and parameter updates.
As illustrated in \Cref{fig:milgrads} (right), each instance $x$ contributes in proportion to its label likelihood $\hat{p}(Y\given x)$, so that positive predictions have more influence and negative predictions have less.

Because the inputs to the soft-max pooling operator are probabilities $\hat{p}(Y\given x) \in [0, 1]$, the weights assigned by \cref{eq:smp} are also bounded.
In general, we have the following relation between a soft-max's input and output:
\begin{proposition}
    Let $a \leq b \in \mathbb{R}$ and $z \in {[a, b]}^m \subset \mathbb{R}^m$, and let $\rho{(z)}_i := \exp(z_i) / \sum_j \exp(z_j)$ denote the soft-max operator.
    Then for any coordinate $i$, the corresponding soft-max output $\rho{(z)}_i$ is bounded as
\[
    \frac{\e^a}{\e^a + (m-1) \cdot \e^b} \leq \rho{(z)}_i \leq \frac{\e^b}{\e^b + (m-1) \cdot \e^a}.
\]\label{smbound}
\end{proposition}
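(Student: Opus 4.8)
The plan is to reduce the claim to elementary monotonicity of the scalar map $\phi(t) = t/(t+c)$ and then push each coordinate to the appropriate corner of the box $[a,b]^m$. First I would isolate the $i$-th coordinate by writing
\[
\rho(z)_i = \frac{\e^{z_i}}{\e^{z_i} + S}, \qquad S := \sum_{j \neq i} \e^{z_j},
\]
so that the quantity to be bounded is a function of the single term $\e^{z_i}$ together with the aggregate $S$, and these depend on disjoint sets of coordinates. Note that $S > 0$ whenever $m \geq 2$ (the case $m = 1$ being trivial, as $\rho(z)_1 = 1$ and both bounds equal $1$).

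Next I would record the two monotonicity facts for $\phi(t) = t/(t+c)$ with $c > 0$: since $\phi'(t) = c/(t+c)^2 > 0$, the map $\phi$ is increasing in $t$, and since $\partial\phi/\partial c = -t/(t+c)^2 < 0$, it is decreasing in $c$. Because $\e^{z_i}$ is governed solely by $z_i$ while $S$ is governed solely by the remaining coordinates, these two monotonicities act independently and can therefore be exploited simultaneously.

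For the upper bound I would maximize $\rho(z)_i$ by driving $\e^{z_i}$ as large as possible and $S$ as small as possible. Under the constraint $z \in [a,b]^m$ this means taking $z_i = b$ and $z_j = a$ for every $j \neq i$, giving $S = (m-1)\e^a$ and hence the claimed bound $\rho(z)_i \leq \e^b / (\e^b + (m-1)\e^a)$. The lower bound follows from the mirror-image choice $z_i = a$, $z_j = b$ for $j \neq i$, yielding $S = (m-1)\e^b$ and $\rho(z)_i \geq \e^a / (\e^a + (m-1)\e^b)$.

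The computation carries no real obstacle; the only point requiring genuine care is the joint optimization, i.e.\ arguing that the extremal value of the \emph{multivariate} map is attained at a single vertex of the hypercube rather than somewhere in the interior or on a lower-dimensional face. This is exactly what the separable, fixed-sign monotonicity established above guarantees: since the objective is coordinatewise monotone with a constant sign in each variable, each extremum over the box is attained at a vertex, and the vertices identified above are precisely the maximizing and minimizing ones. As a sanity check, in the degenerate case $a = b$ both bounds collapse to $1/m$, matching the fact that $\rho(z)_i = 1/m$ when all coordinates are equal.
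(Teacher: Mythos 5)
Your proposal is correct and follows essentially the same route as the paper's proof: isolate $\e^{z_i}$ from the aggregate $S=\sum_{j\neq i}\e^{z_j}$, use the fact that $\rho(z)_i$ is increasing in the former and decreasing in the latter, and conclude that the extrema occur at the vertices $z_i=b,\,z_{j\neq i}=a$ (respectively $z_i=a,\,z_{j\neq i}=b$). Your version merely makes the monotonicity explicit via derivatives of $\phi(t)=t/(t+c)$ and handles the degenerate cases, which is a welcome tightening of the paper's somewhat informal ``proportional/inversely proportional'' phrasing.
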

\begin{proof}
    First, observe that $\rho{(z)}_i$ is proportional to $\exp z_i$ and inversely proportional to $\sum_{j\neq i} \exp z_j$.
    A soft-max coordinate $\rho{(z)}_i$ is therefore maximal when one coordinate $z_i = b$ is maximal, and all remaining $(m-1)$ coordinates $z_{j\neq i} = a$ are minimal.
In this case, the soft-max output $\rho(z)$ \rev{for each coordinate $k$} is
\begin{eqnarray*}
    \rho{(z)}_k &=& \frac{\exp(z_k)}{\e^b + (m - 1) \cdot \e^a}.
\end{eqnarray*}
Since all $z_k \leq b$, this achieves the upper bound.
A similar argument proves the analogous lower bound.
\end{proof}

Applying \cref{smbound}, if a bag has $|X|=m$ instances, and each instance $x \in X$ has a likelihood $0 \leq \hat{p}(Y\given x) \leq 1$, then the weight for each instance is bounded as
\begin{equation}
\frac{1}{1 + \e \cdot (m-1)}
	\leq  \frac{\exp \hat{p}(Y\given x)}{%
    	\displaystyle\sum_{z\in X}\exp\hat{p}(Y\given z)}
	\leq \frac{\e}{\e + m-1}.\label{eq:smpbound}
\end{equation}
Soft-max pooling therefore has limited capacity to concentrate on a small portion of instances within a bag, since the weight for any single instance is $\Theta(1/m)$.
As illustrated in \Cref{fig:smpbounds}, soft-max pooling behaves similarly to unweighted averaging as the bag size grows.

\begin{figure}
    \centering%
    \includegraphics[width=\columnwidth]{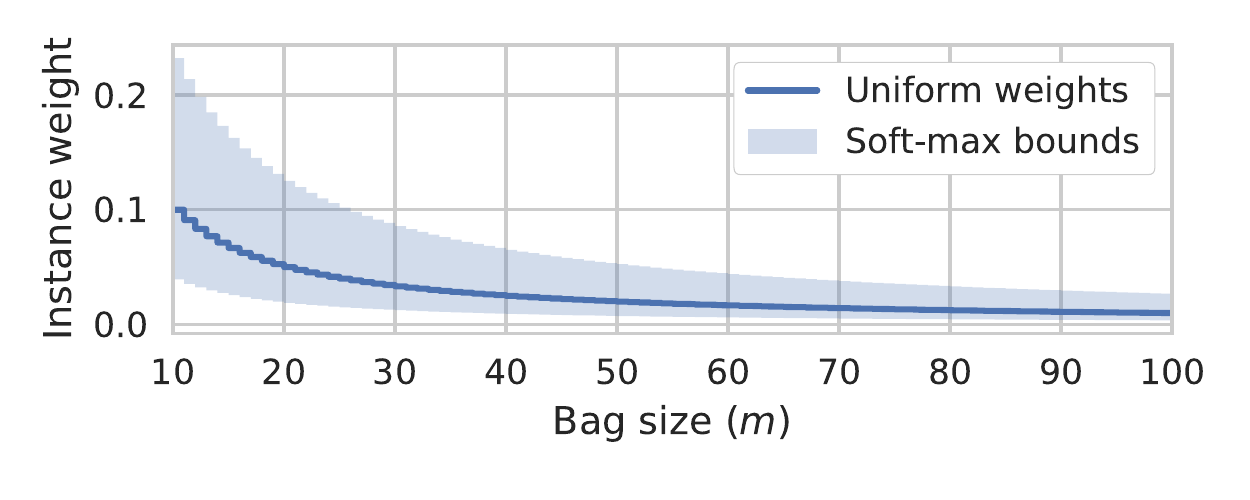}
    \caption{The soft-max weighted average (\cref{eq:smp}) produces instance weights satisfying the bounds given in \cref{eq:smpbound} (shaded region).
    As the size $m$ of the bag grows, the bounds converge to $1/m$ (solid line).\label{fig:smpbounds}}
\end{figure}

\subsection{Auto-pooling}

The bounded range problem of soft-max pooling can be addressed by introducing a scalar parameter $\alpha \in \mathbb{R}$:
\begin{equation}
	\hat{P}_\alpha(Y\given X) = \sum_{x \in X} \hat{p}(Y\given x)  \left(\frac{\exp\left(\alpha\cdot\hat{p}(Y\given x)\right)}{\displaystyle\sum_{z\in X}\exp\left(\alpha\cdot\hat{p}(Y\given z)\right)}\right).\label{eq:autopool}
\end{equation}
Treating $\alpha$ as a free parameter to be learned along-side the model parameters $\theta$ allows \cref{eq:autopool} to automatically adapt to and interpolate between different pooling behaviors.
For example, when $\alpha=0$, \cref{eq:autopool} reduces to an unweighted \rev{mean} (\Cref{fig:milgrads}, center); when ${\alpha=1}$, \cref{eq:autopool} simplifies to soft-max pooling \cref{eq:smp}; and when ${\alpha\rightarrow\infty}$, \cref{eq:autopool} approaches the max operator.
We therefore refer to the operator in \cref{eq:autopool} as \emph{auto-pooling}.

With auto-pooling, for $\alpha \geq 0$, the bounds from \cref{smbound} are $[a, b] = [0, \alpha]$, and the instance weights are bounded by
\begin{equation}
\frac{1}{1 + \e^\alpha \cdot (m-1)}
	\leq  \frac{\exp\left(\alpha\cdot\hat{p}(Y\given x)\right)}{%
    	\displaystyle\sum_{z\in X}\exp\left(\alpha\cdot\hat{p}(Y\given z)\right)}
	\leq \frac{\e^\alpha}{\e^\alpha + m - 1},\label{eq:apbound}
\end{equation}
which approaches the open unit interval $(0, 1)$ as $\alpha\rightarrow\infty$.

Additionally, letting $\alpha \leq 0$ leads to approximate \emph{min}-pooling, where smaller input values receive larger weight in the combination.
In this case, the bounds are $[a, b] = [\alpha, 0]$, and the resulting instance weight bounds are:
\begin{equation}
\frac{\e^\alpha}{\e^\alpha + m-1}
	\leq  \frac{\exp\left(\alpha\cdot\hat{p}(Y\given x)\right)}{%
    	\displaystyle\sum_{z\in X}\exp\left(\alpha\cdot\hat{p}(Y\given z)\right)}
	\leq \frac{1}{1 + \e^\alpha \cdot ( m - 1)}.\label{eq:aplbound}
\end{equation}
As $\alpha\rightarrow-\infty$, the weight bounds again approach the unit interval $(0, 1)$, except that the upper bound is now achieved by the \emph{smallest} instance prediction.
This effectively relaxes the core assumption of multiple instance learning that a bag label is equal to the max (disjunction) over instance labels.
\rev{Supporting min (conjunction) behavior allows for a bag to be predicted as a positive example if \emph{all} of its instances are predicted as positive examples, which would be appropriate for long-duration events.}

\subsection{Constrained auto-pooling}

\Cref{eq:apbound} bounds the effective range of the weight assigned to any given instance in terms of the pooling parameter $\alpha$ and the bag size $m$.
However, in some applications, it may be more natural to constrain $\alpha$ in terms of the amount of weight the pooling operator is allowed to assign to a single instance when making a bag-level decision.
For example, in sound event detection, this may correspond to requiring the detector to be active for at least some minimum time duration for the bag to be predicted as a positive example.
\rev{Alternately, one may require that a minimum fraction of \emph{instances} must be positive before the bag is predicted positive, or equivalently, that no single instance receives too much weight in~\eqref{eq:autopool}.}

Let $1/m \leq \phi_+ < 1$ denote the maximum permissible aggregation weight for a single instance.%
\footnote{The maximum weight $\phi_+$ cannot be less than $1/m$ because all weights must sum to 1.
Similarly, a minimum weight bound $\phi_-$ cannot exceed $1/m$.}
Then $\alpha\geq 0$ can be upper-bounded as:
\begin{equation}
\frac{\e^\alpha}{\e^\alpha + m - 1} \leq \phi_+\quad\Rightarrow\quad
\alpha \leq \ln \frac{\phi_+}{1-\phi_+} + \ln\left(m -1\right).\label{alphabound}
\end{equation}
Similarly, a minimum weight constraint $0 < \phi_- \leq 1/m$ produces the following lower bound for $\alpha$:
\begin{equation}
\phi_- \leq \frac{\e^\alpha}{\e^\alpha + m-1}\quad\Rightarrow\quad
\alpha \geq \ln \frac{\phi_-}{1-\phi_-} + \ln\left(m -1\right).
\end{equation}
Note that these bounds are tight, in that $\phi_- = \phi_+ = 1/m$ implies $\alpha = 0$, which recovers mean-pooling.

Now, consider the minimal upper bound $\phi_+$ that allows a single instance to determine the majority vote for a bag.
This is achieved by the extremal case where one instance $i$ is maximal and the remaining instances $j\neq i$ are minimal:
\begin{equation}
\hat{p}\left(Y\mgiven x_k\right) = \begin{cases}
1 & k = i\\
0 & k \neq i
\end{cases}.
\end{equation}
With the decision rule (and threshold) given in \cref{eq:bagpredict}, ${\phi_+ = 0.5}$ is the minimal upper bound on weights that produces max-pooling behavior, and therefore constitutes an upper bound that does not significantly reduce the flexibility of auto-pooling.
With this value of $\phi_+$, \cref{alphabound} simplifies to
\[
\phi_+ = 0.5 \quad \Rightarrow \quad \alpha \leq \ln(m-1).
\]
Throughout the remainder of this article, we will refer to auto-pooling with the $\phi_+=0.5$ bound imposed as \emph{constrained auto-pool (CAP)}.

\subsection{Regularized auto-pooling}
\rev{%
As an alternative to constrained auto-pool, one may consider \emph{regularized auto-pool (RAP)}, where a penalty is applied to $\alpha$ to prevent it from placing too much weight on individual instances but without an explicit bound on the maximum (or minimum) weight.
While there are many possibilities for the choice of penalty function, here we opt for a quadratic penalty $\alpha^2$, so that the penalty grows with $\alpha$.
This promotes mean-like behavior, but still provides flexibility to learn max-pooling behavior if necessary.

Concretely, for the remainder of this article, we will denote by \emph{RAP} any auto-pool model with a quadratic penalty:
\[
\min_{\theta,\alpha} f(\theta) + \lambda |\alpha|^2,
\]
where $f(\theta)$ denotes the learning objective of \cref{eq:milopt},
and ${\lambda>0}$ is a positive coefficient.
For multi-label formulations, the penalty generalizes to the squared Euclidean norm $\lambda \|\alpha\|^2$.}

\subsection{Multi-label learning}
The discussion so far has centered on binary classification problems, but the methods directly generalize to multi-label settings, in which each instance $x$ receives multiple positive labels.
In this setting, a separate auto-pooling operator is applied to each class.
Rather than a single parameter $\alpha$, there is a vector of parameters $\alpha_c$ where $c$ indexes the output vocabulary.
This allows a jointly trained model to adapt the pooling strategies independently for each category.

\section{Experiments}

In this section, we describe a series of experiments investigating the behavior of auto-pooling methods on three sound event detection applications: urban environments \rev{(URBAN-SED)}, smart cars \rev{(DCASE~2017)}, and musical instruments \rev{(MedleyDB)}.
For each dataset, we compare models trained with standard, non-adaptive pooling operators ($\max$ and mean), the soft-max pooling model described in \Cref{sec:methods}, and the three adaptive methods: auto-pool, constrained auto-pool (CAP), and regularized auto-pool (RAP).
For RAP models, we report results independently for $\lambda \in \{10^{-2}, 10^{-3}, 10^{-4}\}$.
For the urban environment and musical instrument applications, we will also compare to a model trained with strong (time-varying) labels to provide a sense of the maximum expected performance for the given model architecture.
\rev{Models trained with strong labels omit the temporal pooling step, and the training loss is computed independently for each instance.}
The smart car dataset \rev{(DCASE~2017)} does not provide strong labels for the training set, so this comparison could not be performed.

We report standard evaluation metrics for both static (bag-level) and dynamic (segment-level) prediction: precision, recall, and $F_1$.\footnote{$F_1$-macro \rev{reports the unweighted average of class-wise $F_1$ scores.} Micro-averages are not available for segment-based evaluation.}
For static predictions the metrics are computed following the standard methodology for multi-label classification evaluation. 
For the dynamic predictions we compute the segment-based variant of the aforementioned metrics as defined by Mesaros et al.~\cite{Mesaros_2016} using a segment duration of $1$~second, as per the DCASE challenge~\cite{mesaros2017dcase}.
Additionally, for the dynamic prediction task, we report the \emph{error rate} $E$, defined as the average number of substitutions, insertions, and deletions of events over all segments.
Note that precision, recall and $F_1$ range from 0 (worst) to 1 (best), while the error rate $E$ is non-negative with 0 being the best and greater values representing worse performance.

\subsection{Datasets}

\subsubsection{URBAN-SED}
URBAN-SED is a dataset of 10,000 soundscapes generated using the Scaper soundscape synthesis library~\cite{salamon2017scaper}.
Each soundscape has a duration of 10~s, and the dataset as a whole totals 27.8 hours of audio with close to 50,000 annotated sound events from 10 sound classes.
Each soundscape contains between 1--9 foreground sound events, where the source material for the events comes from the UrbanSound8K dataset~\cite{Salamon:UrbanSound:ACMMM:14}, and has a background of Brownian noise resembling the typical ``hum'' often heard in urban environments.
The dataset comes pre-sorted into train, validation and test splits containing 6000, 2000 and 2000 soundscapes respectively. 

An important characteristic of URBAN-SED is that since both the audio and annotations were generated computationally, the annotations are guaranteed to be correct and complete, while the dataset is an order of magnitude larger than the largest strongly labeled SED dataset compiled via manual labeling.
Since the soundscapes are ``composed'' using a process akin to an audio sequencer, they are not as realistic as manually labeled datasets of real soundscape recordings.
In particular, as illustrated in \Cref{fig:urbansed:durations}, events may be artificially truncated in duration in unnatural-sounding ways.
Still, it has been shown that the data still present a challenging scenario for state-of-the-art SED models~\cite{salamon2017scaper}.

\begin{figure}
\centering
\includegraphics[width=\columnwidth]{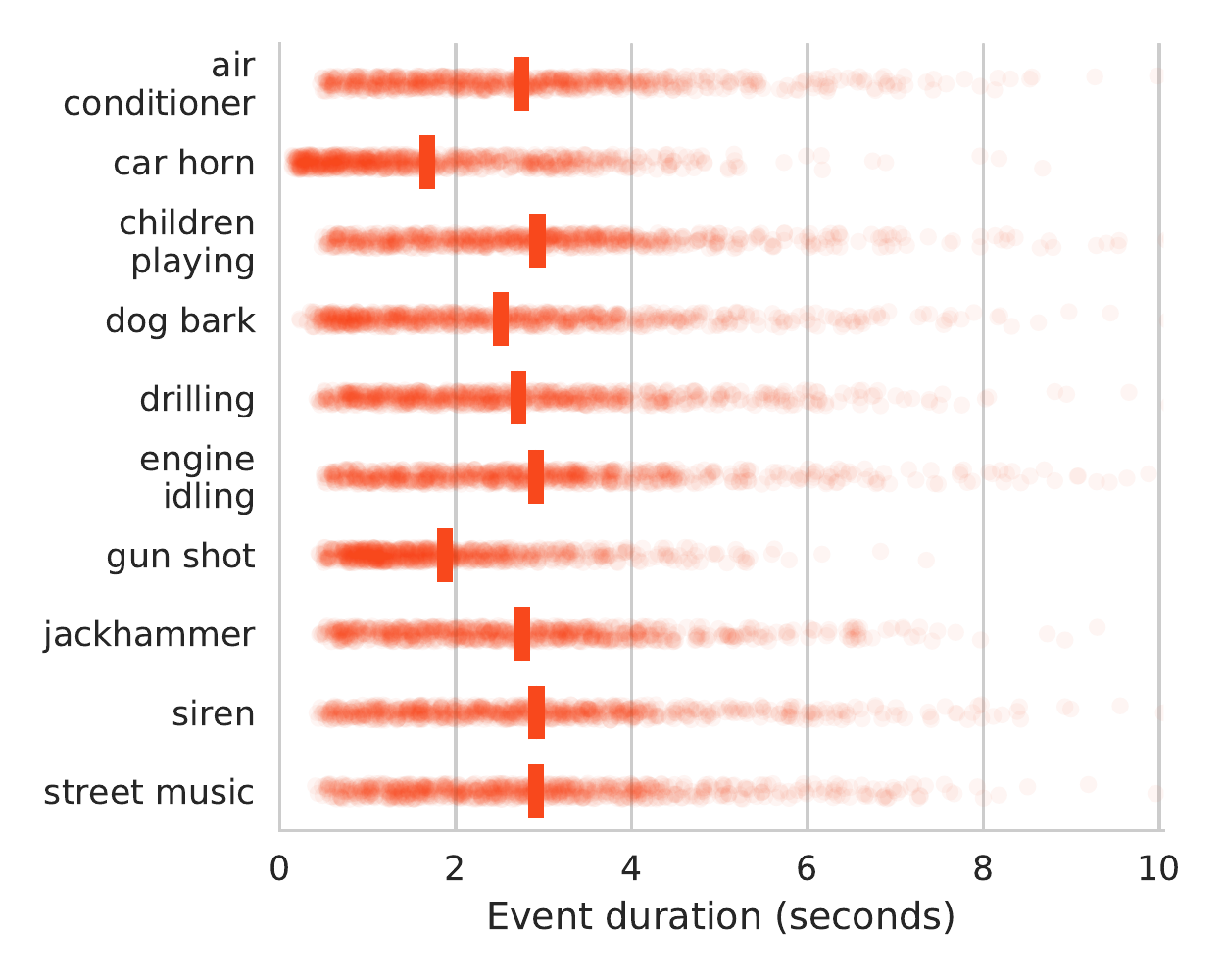}%
\caption{Event durations for each class in the URBAN-SED.\@
    Each point corresponds to a test clip, and the mean event durations are indicated by vertical bars.
By construction, each event is clipped to at most 3~s (30\% of the clip), though an event class can occur multiple times within a clip.\label{fig:urbansed:durations}}
\end{figure}

\subsubsection{DCASE~2017 Task 4} 
The DCASE~2017 challenge~\cite{mesaros2017dcase} consisted of four tasks, including one task with the same problem formulation as this work (training a model to generate strong predictions from weakly labeled training data), task 4: ``Large-scale weakly supervised sound event detection for smart cars''.
The dataset used for this task is a subset of the AudioSet dataset~\cite{Gemmeke:AudioSet:ICASSP:17}, and consists of just over 50K 10-second excerpts from YouTube videos.
The dataset is split into a ``development'' set of 51660 excerpts and an ``evaluation'' (\emph{i.e.}, test) set of 1103 excerpts.
The development set is further divided into a ``train'' set with 51172 excerpts and a validation set of 488 videos.\footnote{To avoid possible confusion it is necessary to highlight the difference between the nomenclature used in the challenge and the nomenclature more commonly found the literature, as the latter will be used in this study for consistency. Throughout this paper, we use the challenge ``train'' set as our training set, the challenge ``test'' set as our validation set, and the challenge ``evaluation'' set as our test set.}

For conciseness, for the remainder of the paper we shall refer to this dataset simply as ``DCASE~2017''.
The sound events in this dataset come from 17 sound classes selected by the challenge organizers out of the AudioSet ontology~\cite{Gemmeke:AudioSet:ICASSP:17} that are related to traffic such as sirens, horns, beeps, and different types of vehicles such as car, bus and truck.
The weak labels were generated semi-automatically~\cite{Gemmeke:AudioSet:ICASSP:17}, while strong labels for the validation and test sets were manually annotated by the challenge organizers by listening to the audio (without watching the video).
The dataset fits our problem formulation, but its annotations have limitations, which makes proper evaluation difficult. 
Not all target sound events are guaranteed to be labeled and have a non-zero duration, and some such as ``car'' and ``car passing by'' are semantically overlapping.
However, it does have a unique distribution of event durations compared to the other two datasets used in this study.
The event durations for this dataset, depicted in \Cref{fig:dcase:durations}, follow a more natural distribution than those of URBAN-SED (\Cref{fig:urbansed:durations}), which we expect to influence the behavior of the proposed models, in particular the auto-pool models where $\alpha$ is learned from the data.
\rev{Our motivation for including this dataset in the evaluation is primarily to study the adaptive behavior of the $\alpha$ parameter, and not to achieve the best possible performance (measured by $F_1$).\footnote{For a thorough evaluation of existing methods on this dataset, we refer interested readers to the DCASE~2017 challenge results: \url{https://www.cs.tut.fi/sgn/arg/dcase2017/challenge/task-large-scale-sound-event-detection-results}.}}

\begin{figure}
    \centering%
    \includegraphics[width=\columnwidth]{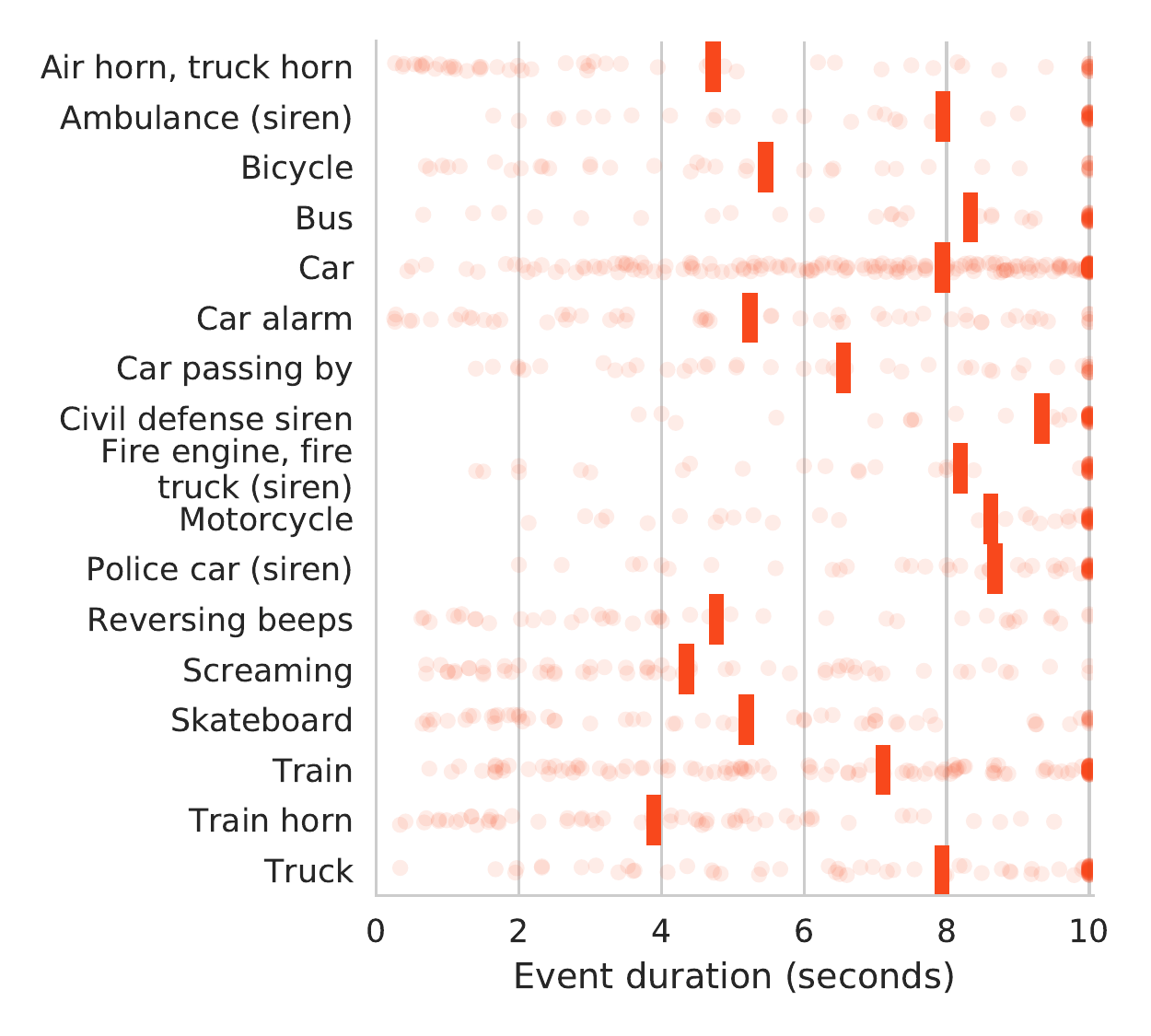}
    \caption{Event durations for each class in DCASE~2017.
    Each point corresponds to a test clip, and the mean event durations are indicated by vertical bars.
DCASE events typically cover at least 40\% (4~s) of the clip, and the high concentrations at 10.0 indicate that events often span the entire clip.\label{fig:dcase:durations}}
\end{figure}

\subsubsection{MedleyDB}
MedleyDB~\cite{bittner2014medleydb} is a collection of 122 multi-track recordings from a variety of musical genres and styles.
While it was initially developed to facilitate pitch tracking evaluation, it includes time-varying instrument activation labels for each track.

Because each track in MedleyDB is provided in the form of isolated instrument recordings (\emph{stems}), it is possible to generate different mixtures of the stem recordings for any given track.
This motivates a form of data augmentation: if a track has $n$ instruments, we generate $n$ alternate mixes, where mix $i$ has the $i$th instrument removed; the remaining $n-1$ stems are linearly mixed to best approximate the full mix, using the mixing coefficients provided by the MedleyDB python package.\footnote{\url{https://github.com/marl/medleydb}}
By training on this expanded set of \emph{leave-one-out} mixes, we separate each instrument from its surrounding context, which helps to eliminate confounding factors when estimating the presence of each instrument.
The expanded MedleyDB set contains 531 tracks, totaling 33.1 hours of audio.

Because of the skewed distribution of instruments in MedleyDB, we reduced the vocabulary of interest to the 8 most common sources: \emph{acoustic guitar, clean electric guitar, distorted electric guitar, drum set, electric bass, female singer, male singer, piano}.
Unlike URBAN-SED and DCASE, there is not a pre-defined evaluation split of MedleyDB.\@
We instead repeated the experiment over 10 random, artist-conditional 80--20 train-test splits; validation sets were randomly split 80--20 from the training splits (without artist conditioning).
Having multiple train-test splits allows us to perform statistical analyses which are not possible with the URBAN-SED and DCASE datasets.
\rev{We therefore do not make claims as to which methods perform ``best'' on URBAN-SED and DCASE.}

\Cref{fig:medley:durations} illustrates the distribution of instrument activation durations over the dataset.
Most instruments are active for substantially longer than the 10~s observation window used in our experiments, indicating that labels should be expected to be constant (entirely on or entirely off) over the duration of a training example.

\begin{figure}
\centering
\includegraphics[width=\columnwidth]{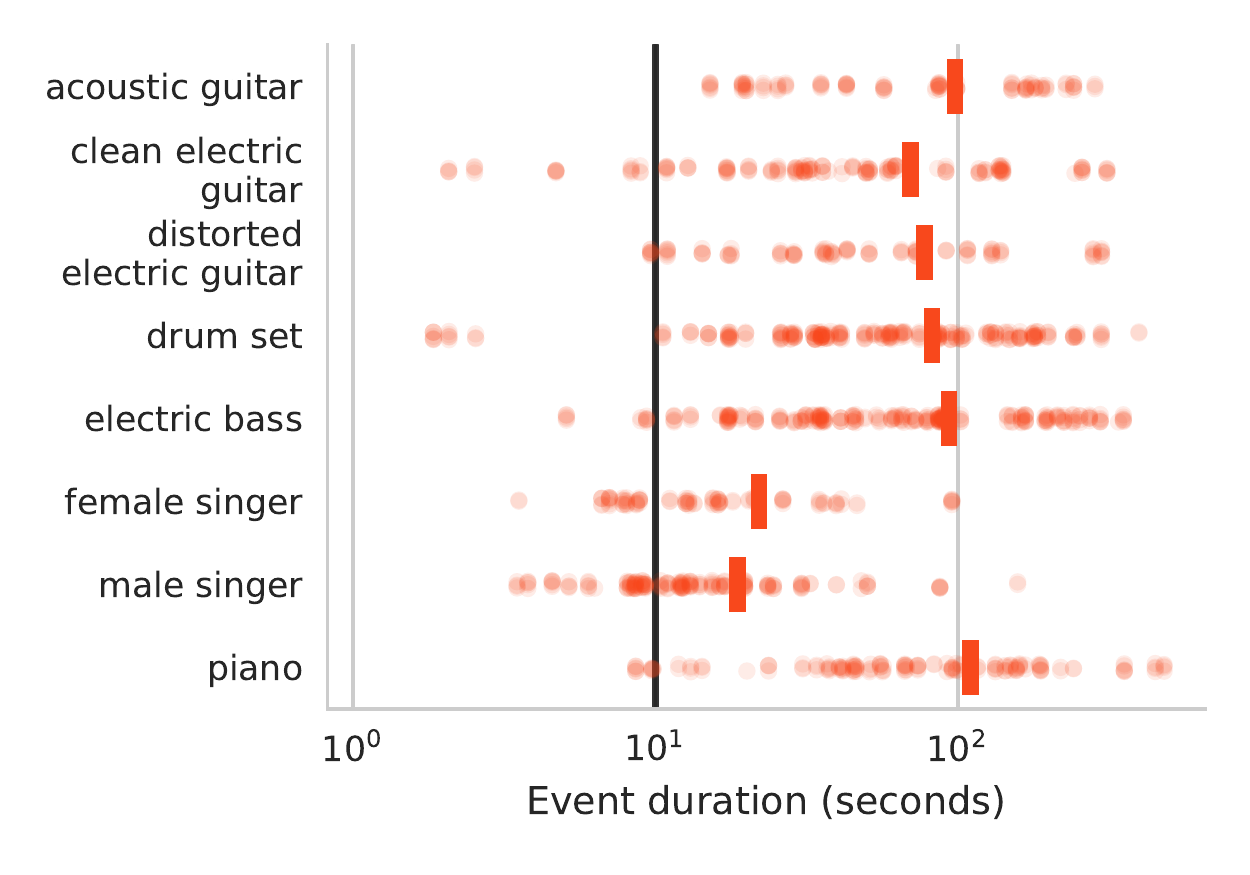}%
\caption{Event durations for each class in MedleyDB (logarithmically scaled).
Each point corresponds to the total duration of an instrument over a track, with the mean durations indicated by vertical bars.
The black line marks the 10~second point used to generate training patches.\label{fig:medley:durations}}
\end{figure}

\subsection{Model architecture}
The model used in this work is divided into two main components: a \emph{dynamic} predictor that generates predictions at a fine temporal resolution (\emph{i.e.}, frame/instance-level predictions), and a pooling layer which aggregates the instance-level predictions into a single \emph{static} (bag-level) prediction.
Our goal is to compare and contrast the different pooling functions proposed in \Cref{sec:methods}.
As such, in this work we adopt a single model architecture for the dynamic predictor, and keep it fixed throughout the study.
A block diagram depicting the complete architecture including the dynamic predictor followed by the temporal pooling layer is provided in \Cref{fig:blockdiagram}.

\begin{figure*}
\centering
\includegraphics[width=0.85\textwidth]{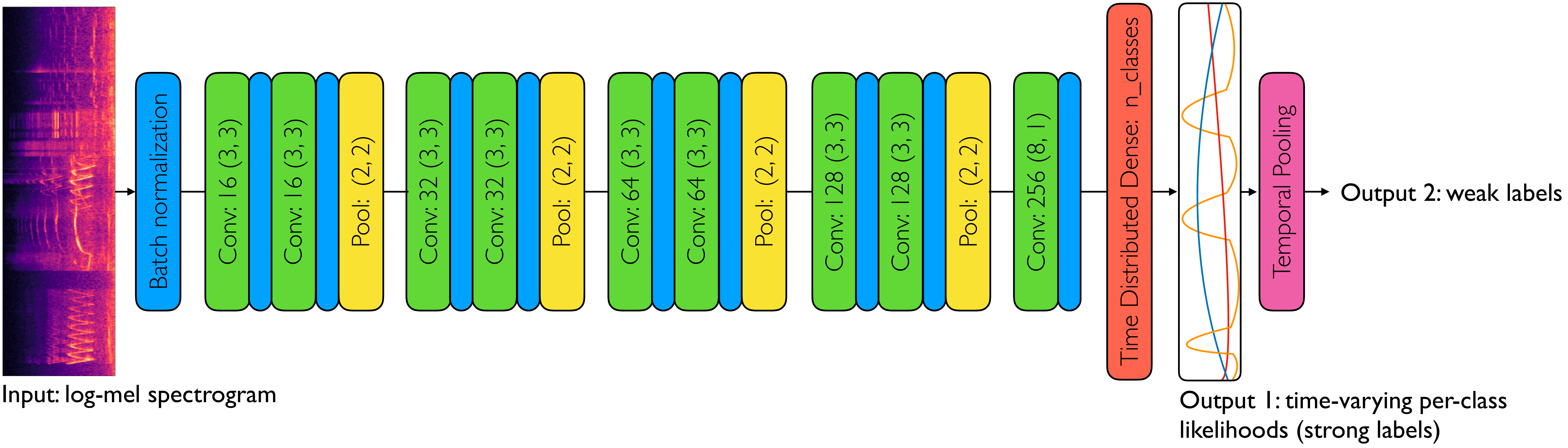}
\caption{Block diagram of the model architecture used in this study, including its two main components: a fully convolutional dynamic predictor, followed by a temporal pooling layer implemented by any of the pooling functions described in \Cref{sec:methods}: max, mean, soft-max, auto-pool, constrained auto-pool (CAP) or regularized auto-pool (RAP).}
\label{fig:blockdiagram}
\end{figure*}

For the dynamic predictor, we use an architecture inspired by the audio subnetwork of the $L^3$-Net architecture proposed by Arandjelovic and Zisserman~\cite{arandjelovic2017look}, which was shown to learn highly discriminative deep audio embeddings from a self-supervised audio-visual correspondence task.
In this work the input dimensions are ordered as (feature, time); details about the input are provided in \Cref{sec:experiments_traineval}.
The model begins with four convolutional blocks, each block consisting of two convolutional layers followed by strided $(2,2)$ max-pooling, where the number of convolutional filters is doubled for each subsequent block $(16, 32, 64, 128)$ and all filters are of dimensionality $(3,3)$.
This is followed by a single convolutional layer with 256 full-height $(8,1)$ filters, followed by a \rev{single dense layer applied independently to each time-step,} and with as many outputs (units) as there are classes in the dataset being used.
Batch normalization~\cite{Ioffe:BatchNorm:ICML:15} is applied to the output of every convolutional layer as well as to the input to the network.
We apply dimensionality-maintaining padding (``same padding'') to the input to all convolutional layers but the last, where we do not apply padding (``valid padding'').
We use rectified linear unit (ReLU) activations for all convolutional layers and sigmoid activations for the \rev{output} layer to support multi-label classification.
The output of the latter is a multi-label prediction $\hat{p}(Y\given x)$ for each frame (instance) $x$. 
Note that since the model down-samples in time by max-pooling, the frame rate of the dynamic predictions is reduced by a factor of 16 from the input.

Finally, the output of the dynamic predictor is aggregated over all instances using one of the pooling operators presented in \Cref{sec:methods} to produce a static prediction $\hat{P}(Y\given X)$ for each class, represented in \Cref{fig:blockdiagram} by the temporal pooling layer at the right end of the diagram.

Note that since the dynamic predictor is composed of convolutional layers and a single time-distributed dense layer, it is agnostic of the input length (\emph{i.e.}, the number of input instances/frames).
This is followed by the temporal pooling layer which is again agnostic of the input length.
As such, the entire architecture is length-agnostic (for audio, duration-agnostic) and can accept input of arbitrary length.
That said, some of the pooling functions presented in \Cref{sec:methods} \emph{are} affected by the length of the input: \emph{e.g.}, soft-max pooling approaches mean pooling as the length of the input increases, and the parameter $\alpha$ for auto-pooling methods depends on the bag length $m$.
However, this only matters for static prediction, and after the model has been trained, it can still produce dynamic predictions on arbitrary-length inputs.

\subsection{Training and evaluation}
\label{sec:experiments_traineval}
In all experiments, training data was augmented using MUDA~\cite{mcfee2015software} to generate pitch-shifted versions of each example by $\{\pm 1, \pm 2\}$ semitones, increasing the effective training set size by a factor of 5.
All signals were processed with librosa 0.5.1~\cite{mcfee_brian_2017_293021} to produce log-scaled Mel spectrograms with the following parameters: sampling rate 44.1~KHz, $n_\text{FFT} = 2048$ (46ms windows), hop length of 1024 samples (frame rate of 43~Hz), and 128 Mel frequency bands.
The models produce dynamic predictions at a frame rate of $43/16\approx2.69$~Hz.

Models were implemented using Keras~\cite{chollet2015keras} and TensorFlow~\cite{abadi2016tensorflow}.
Each model was trained using the Adam optimizer~\cite{kingma2014adam}, with data sampled using Pescador~1.1~\cite{brian_mcfee_2017_848831}.
Models were trained on mini-batches of $16$ 10-second patches.
Early stopping was used if the validation accuracy did not improve for 30 epochs; learning rate reduction was performed if the validation accuracy did not improve for 10 epochs.
Auto-pool models (including CAP and RAP) were initialized with $\alpha=1$.

All models were evaluated using the sed\_eval package~\cite{Mesaros_2016} to compute segment-based metrics with the segment duration set to 1~s as per the DCASE~2017 challenge evaluation.
For comparison purposes, we report accuracy for static (bag-level) prediction accuracy using the decision rule given in \cref{eq:bagpredict}, \rev{\emph{i.e.}, the maximum over dynamic predictions}.

When training on the MedleyDB dataset, training samples were generated by randomly sampling 10~second excerpts from the full-duration songs.
The bag label for each excerpt was considered positive for any instruments which were active for at least 10\% (1~s) of the excerpt, to match the 1~s duration used for the segment-based evaluation.

For reproducibility, we make our implementation and experiment framework software used in this study publicly available.\footnote{\url{https://github.com/marl/milsed}} 
To enable easy use of the proposed auto-pool function in new work, we have also implemented it as an independent Keras layer.\footnote{\url{https://github.com/marl/autopool}}

\section{Discussion}

This section describes the results of the experimental evaluation, broken down by data-set.

\subsection{URBAN-SED results}
\begin{table}
\centering
\caption{Class-aggregated results on URBAN-SED\label{tab:urbansed}.}
\begin{tabular}{lrrrrrrr}
\toprule
& \multicolumn{3}{c}{Static}
& \multicolumn{4}{c}{Dynamic}\\
Model 
		& $F_1$ & $P$ & $R$
		& $F_1$ & $P$ & $R$ & $E_\downarrow$\\
\cmidrule( r ){1-1}
\cmidrule( r ){2-4}
\cmidrule( r ){5-8}
Max 					& 0.742 & 0.774 & 0.717 & 0.463 & 0.774 & 0.330 & 0.695 \\
Mean 					& 0.543 & 0.726 & 0.436 & 0.408 & 0.280 & 0.751 & 2.10  \\
Soft-max				& 0.630 & 0.772 & 0.537 & 0.492 & 0.397 & 0.646 & 1.22  \\
\midrule
RAP $10^{-2}$	        & 0.544 & 0.719 & 0.449 & 0.419 & 0.296 & 0.717 & 1.88 \\
RAP $10^{-3}$	        & 0.746 & 0.790 & 0.711 & 0.529 & 0.584 & 0.484 & 0.731 \\
RAP $10^{-4}$	        & 0.754 & 0.754 & 0.756 & 0.526 & 0.650 & 0.442 & 0.681 \\
CAP 					& 0.754 & 0.781 & 0.732 & 0.533 & 0.622 & 0.466 & 0.696 \\
Auto 					& 0.757 & 0.784 & 0.739 & 0.504 & 0.738 & 0.382 & 0.665 \\
\midrule
Strong 					& 0.762 & 0.708 & 0.822 & 0.551 & 0.693 & 0.458 & 0.642 \\
\bottomrule
\end{tabular}
\end{table}

\Cref{tab:urbansed} presents the results of the URBAN-SED evaluation, \rev{averaged} across all classes.
On the static prediction task, auto-pool achieves the highest $F_1$ score of all MIL models under comparison, although the constrained and regularized variants are nearly equivalent.
Note that the \emph{strong} model, trained with full access to time-varying labels, performs only slightly better, indicating that the auto-pool is effective for static prediction.

This trend carries over to the dynamic prediction task, where the constrained auto-pool model (CAP) achieves $F_1=0.533$, compared to the strong model's $F_1=0.551$,
\rev{and comparable scores are achieved by the regularized models with $\lambda \in \{10^{-3}, 10^{-4}\}$.
On this dataset, the auto-pool model appears to over-fit} the weak annotations, and a similar trend can be observed for the $\max$-pooling model.
Conversely, RAP with $\lambda=10^{-2}$ appears to be over-regularized, and behaves similarly to mean-pooling on both static and dynamic prediction tasks.

\begin{figure*}
\centering
\includegraphics[width=0.8\textwidth]{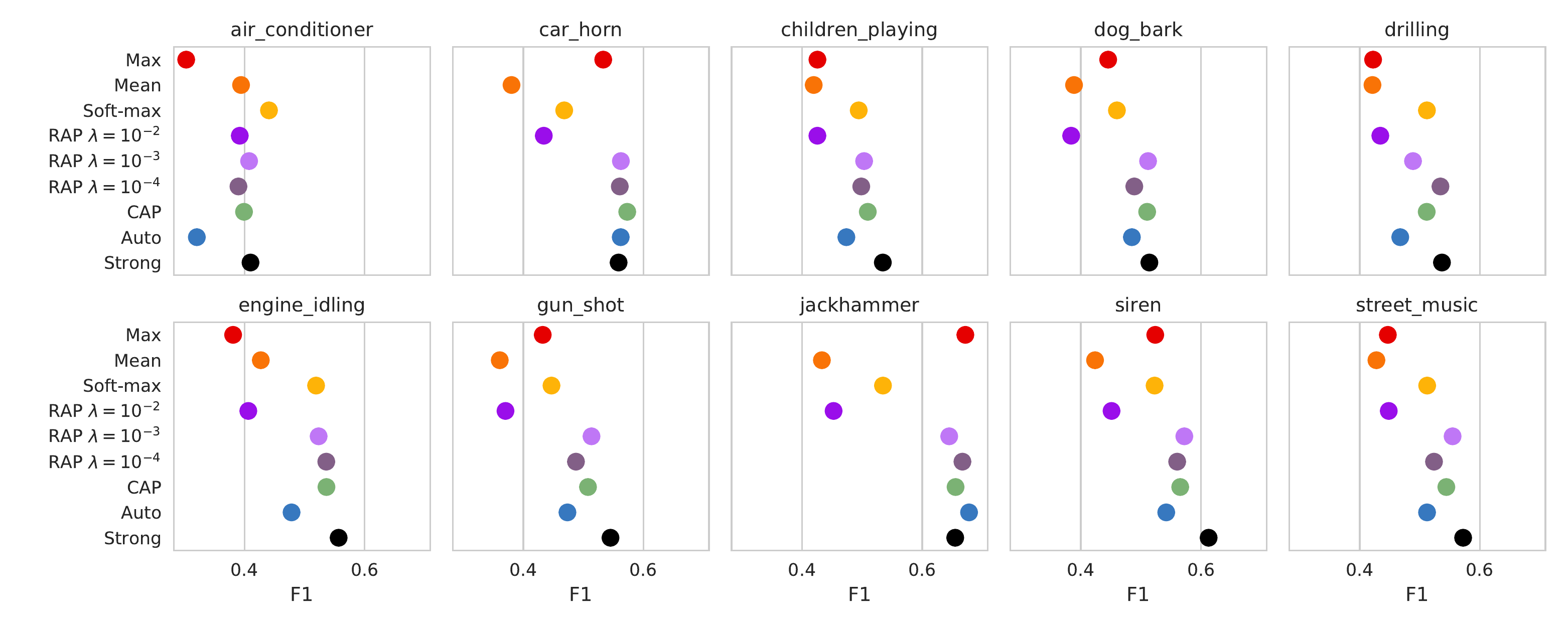}
\caption{URBAN-SED  results: per-class dynamic $F_1$ scores for each model under comparison.\label{fig:urbansedclass}}
\end{figure*}

\Cref{fig:urbansedclass} shows the $F_1$ scores independently for each class.
While there is some variation across classes, RAP ($\lambda\leq10^{-3}$) and CAP \rev{consistently achieve high scores}, and closely track the strong model.
Mean and RAP ($\lambda=10^{-2}$) tend to do poorly on event classes which are transient or highly localized in time (\emph{gun shot}, \emph{car horn}).
This is in accordance with \Cref{fig:milgrads}: mean-pooling predictions of sparse event categories assigns equal responsibility to each frame in the input, which will be erroneous for any frames that do not cover the event in question.
The fact that RAP $\lambda=10^{-2}$ exhibits this behavior indicates that the regularization term is too strong, and the model reverts to mean pooling.

\begin{figure}
\centering
\includegraphics[width=0.8\columnwidth]{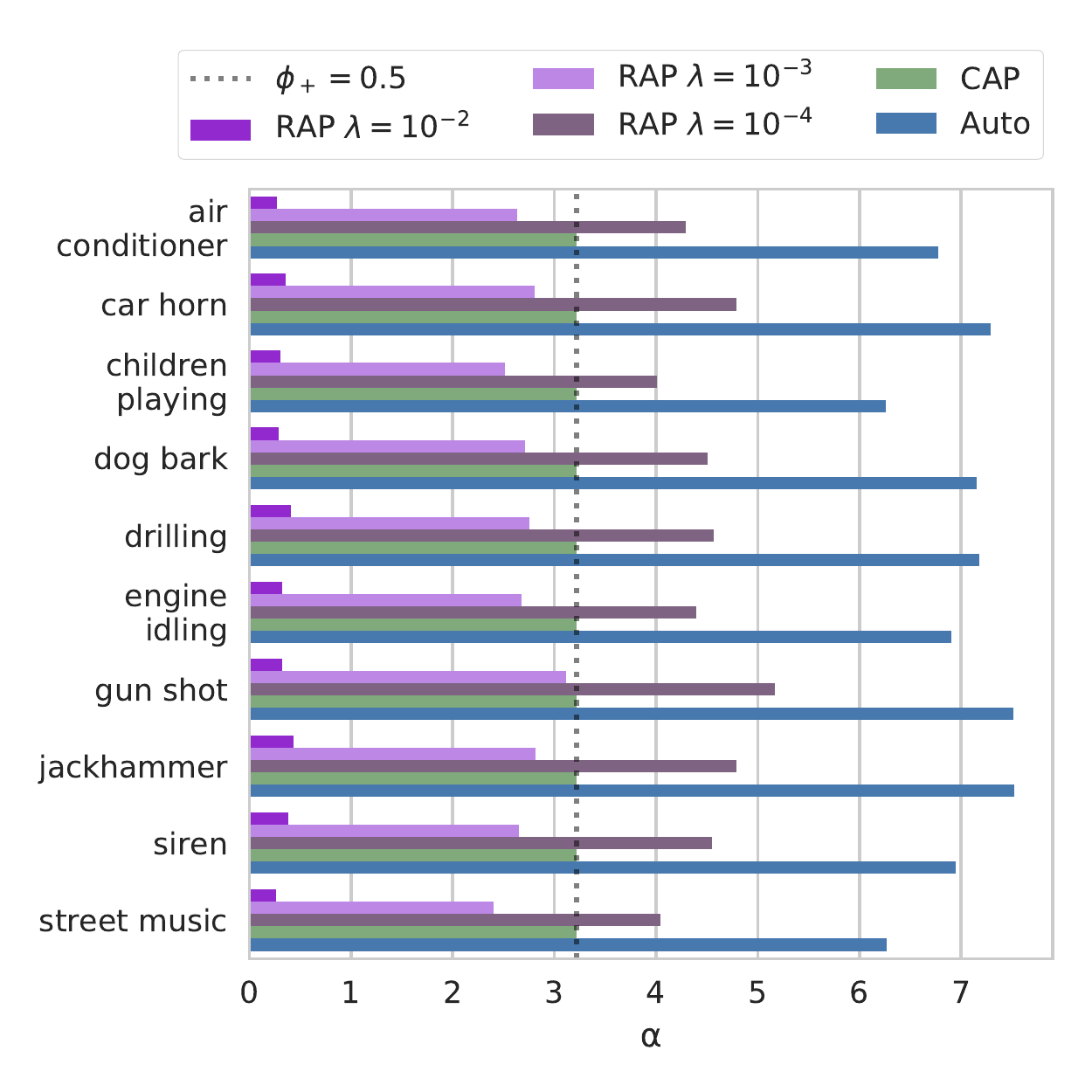}
\caption{URBAN-SED results: learned $\alpha$ parameters for each event class, for auto-pool, constrained auto-pool (CAP), and regularized auto-pool (RAP).
\label{fig:urbansedalpha}}
\end{figure}

\Cref{fig:urbansedalpha} illustrates the $\alpha$ vectors learned by each auto-pooling model.
In particular, the CAP model learns to maximize all $\alpha$ to the upper bound, indicating that max-like behavior is preferred for all classes.
This is likely an artifact of how the dataset was constructed: events are artificially clipped to at most 3~seconds, which results in implicitly sparse class activations for each bag (\Cref{fig:urbansed:durations}).
Note, however, that although the auto-pool models learn to produce max-like behavior, they consistently outperform the $\max$-pool model on this dataset.
This finding is consistent with the motivations for soft-max pooling given in \Cref{sec:methods}: max-pooling produces extremely sparse gradients during training, which impedes the model's ability to learn stable representations.
By contrast, initializing the auto-pool model with $\alpha=1$ (softmax-like behavior) produces dense gradients early in training, which become sparser as the model converges toward max-like behavior.

\begin{figure}
\centering%
\includegraphics[width=1.1\columnwidth]{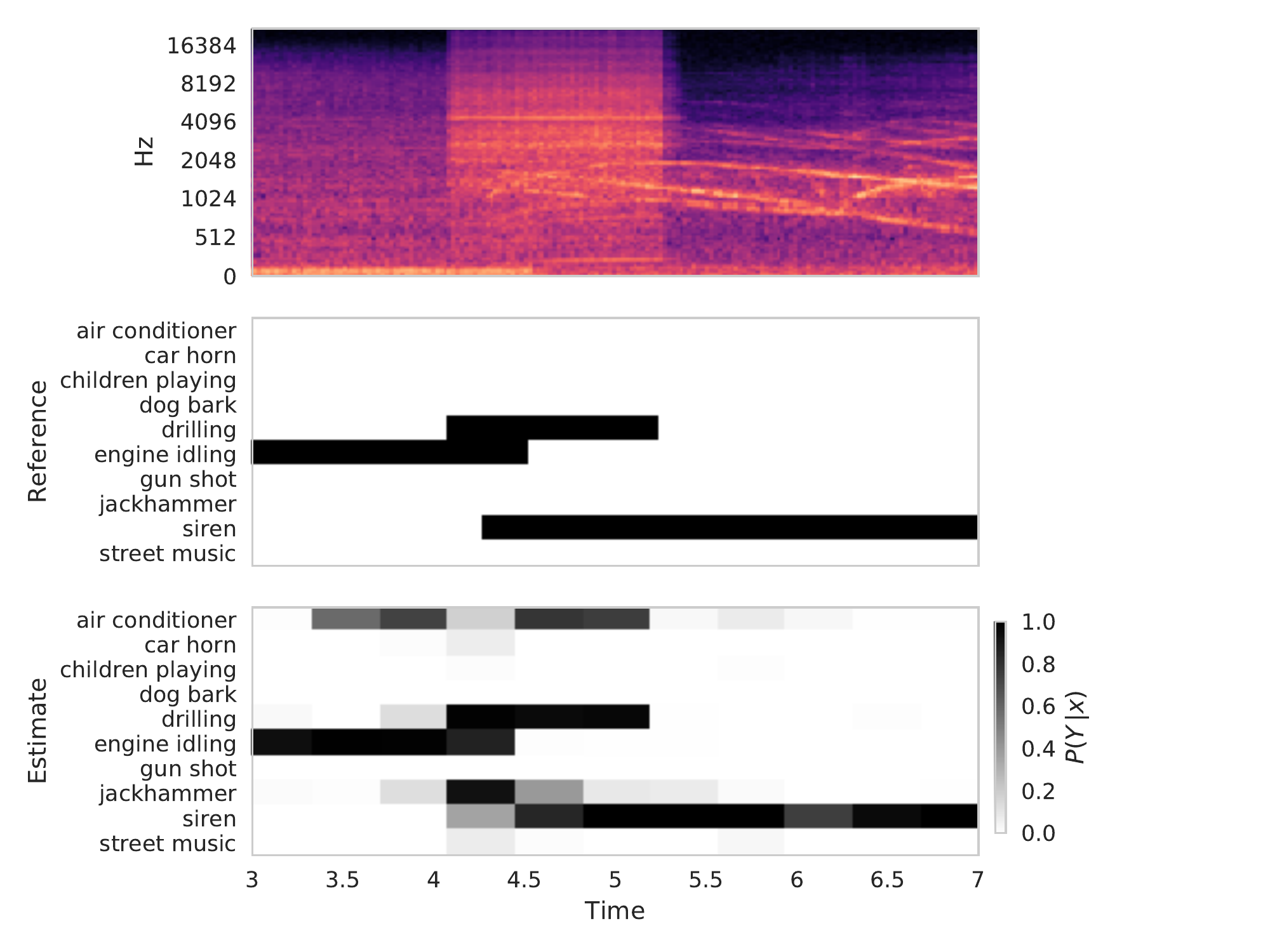}
\caption{Dynamic predictions made by the RAP model $\left(\lambda = 10^{-3}\right)$ on a validation clip from URBAN-SED.\@
Top: the input mel spectrogram; middle: the (dynamic) reference annotations; bottom: the predicted label likelihoods.\label{fig:urbansedpred}}
\end{figure}

\Cref{fig:urbansedpred} illustrates the predictions made by the RAP model with $\lambda={10}^{-3}$ on a validation clip.
While the model does show some confusion (\emph{engine\_idling} and \emph{air\_conditioner}, or \emph{drilling} and \emph{jackhammer}), the temporal localization is generally good.

\subsection{DCASE~2017 results}
\begin{table}
\centering
\caption{Aggregate results on DCASE~2017\label{tab:dcase}.}
\begin{tabular}{lrrrrrrr}
\toprule
& \multicolumn{3}{c}{Static}
& \multicolumn{4}{c}{Dynamic}\\
Model 
		& $F_1$ & $P$ & $R$
		& $F_1$ & $P$ & $R$ & $E_\downarrow$\\
\cmidrule( r ){1-1}
\cmidrule( r ){2-4}
\cmidrule( r ){5-8}
Max 					& 0.257 & 0.650 & 0.267 & 0.252 & 0.679 & 0.155 & 0.874 \\
Mean 					& 0.397 & 0.712 & 0.384 & 0.426 & 0.309 & 0.685 & 1.57 \\
Soft-max 				& 0.389 & 0.683 & 0.381 & 0.466 & 0.391 & 0.576 & 1.04 \\
\midrule
RAP $10^{-2}$	& 0.355 & 0.696 & 0.359 & 0.436 & 0.325 & 0.663 & 1.43 \\
RAP $10^{-3}$	& 0.357 & 0.669 & 0.357 & 0.410 & 0.308 & 0.613 & 1.44 \\
RAP $10^{-4}$	& 0.372 & 0.694 & 0.374 & 0.445 & 0.340 & 0.642 & 1.32 \\
CAP 					& 0.426 & 0.700 & 0.414 & 0.427 & 0.360 & 0.524 & 1.12 \\
Auto 					& 0.454 & 0.664 & 0.453 & 0.425 & 0.401 & 0.451 & 0.968 \\
\bottomrule
\end{tabular}
\end{table}

\Cref{tab:dcase} presents the class-aggregated results on the DCASE~2017 data.
Note that because the DCASE training data only has clip-level annotations, we cannot compare to a baseline model trained on strong annotations.
As before on URBAN-SED, the auto-pool method achieves the highest static $F_1$ score.
Soft-max pooling achieves the highest dynamic $F_1$ score (0.466), but both the mean and auto-pool methods are comparable, all landing in the range of 0.41--0.45.

\begin{figure*}
\centering
\includegraphics[width=\textwidth]{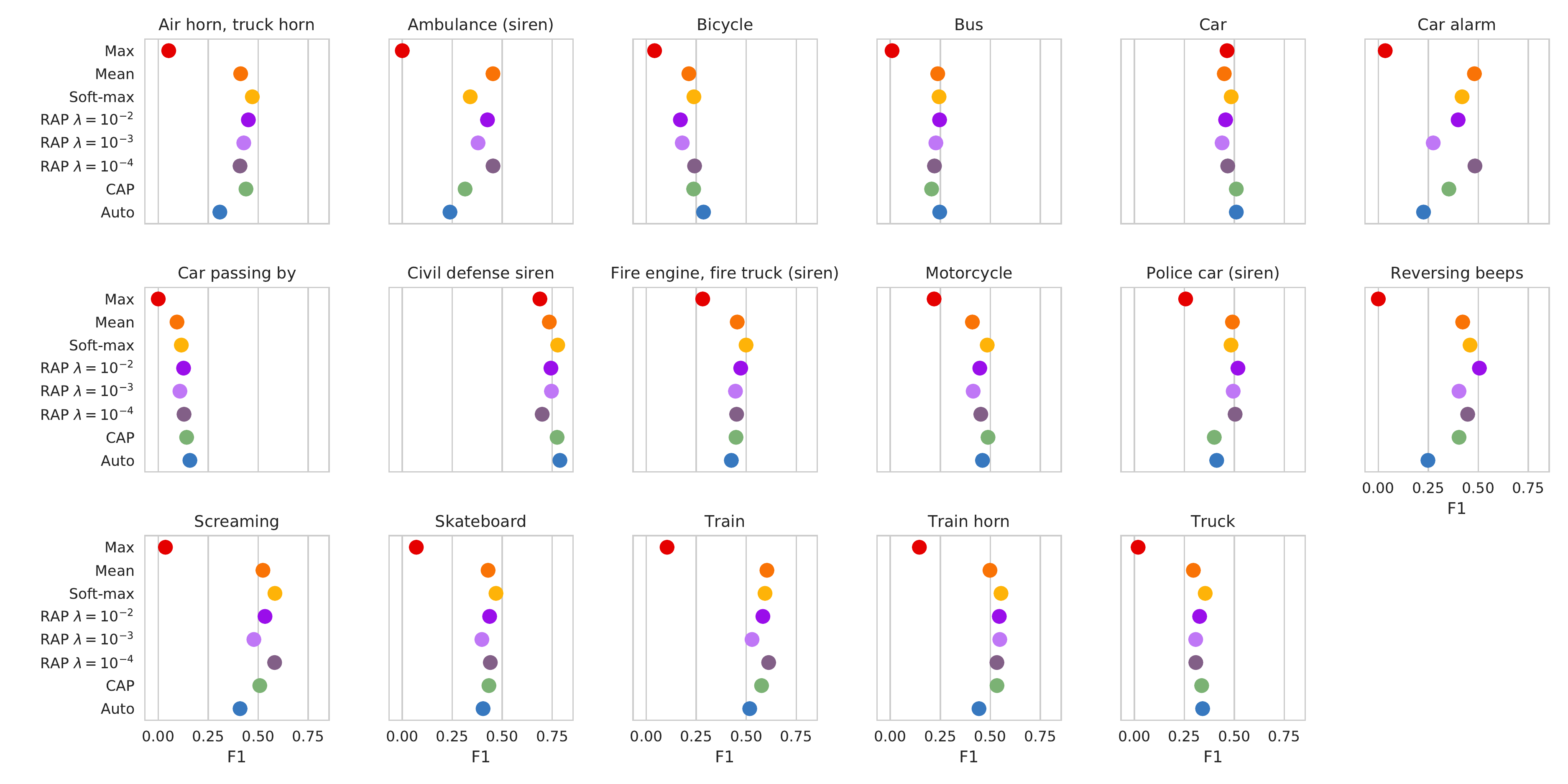}
\caption{DCASE~2017 results: per-class dynamic $F_1$ scores for each model under comparison.\label{fig:dcaseclass}}
\end{figure*}

Notably, the max-pooling model substantially under-performs the competing methods on both static and dynamic prediction tasks.
This holds uniformly across all per-class evaluations, as illustrated in \Cref{fig:dcaseclass}.
With the exception of unconstrained auto-pool, the remaining models generally perform comparably across all classes.

\begin{figure}
\centering%
\includegraphics[width=0.8\columnwidth]{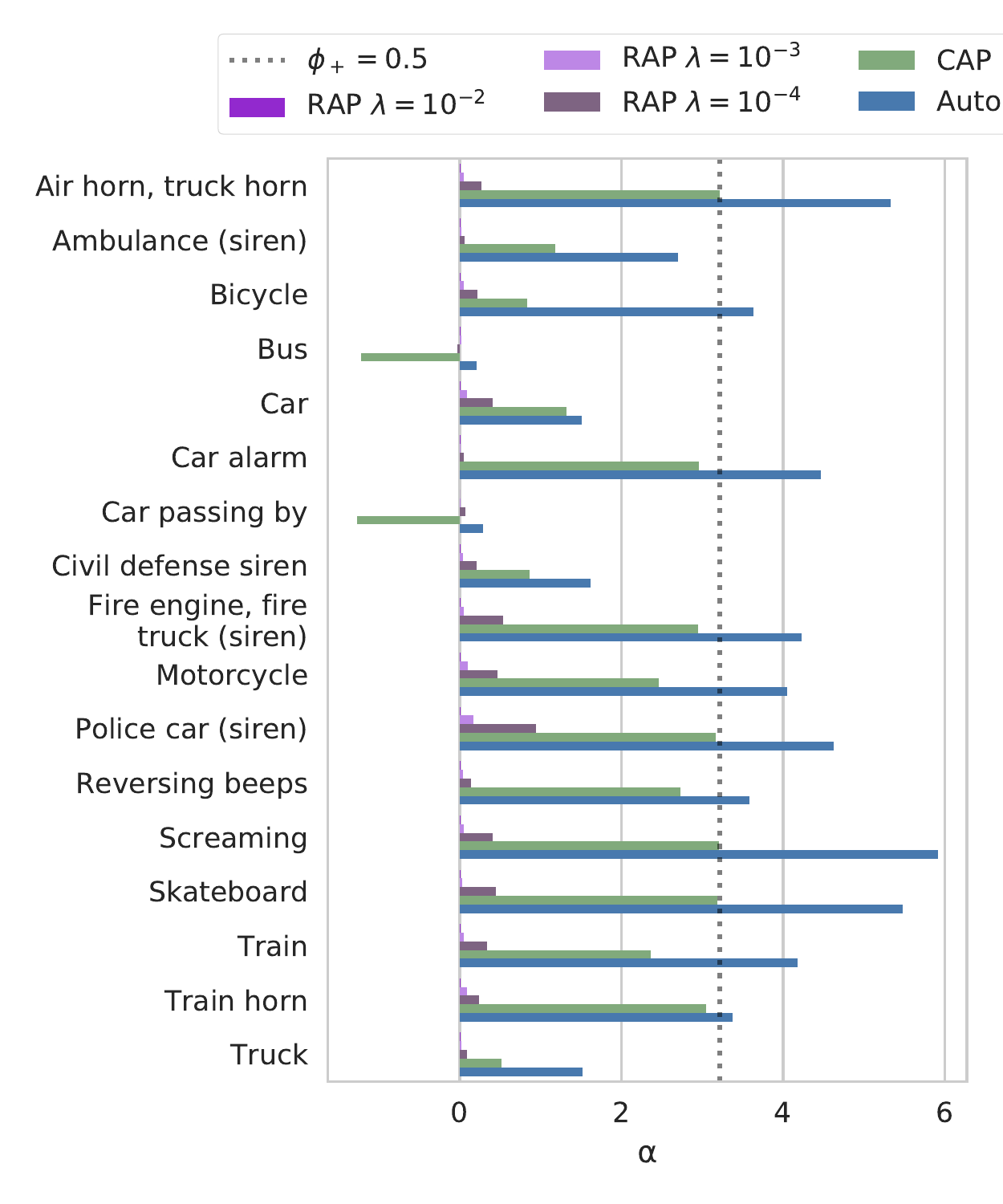}%
\caption{DCASE~2017 results: learned $\alpha$ parameters for each event class, for auto-pool, constrained auto-pool (CAP), and regularized auto-pool (RAP).\@
\label{fig:dcasealpha}}
\end{figure}

\Cref{fig:dcasealpha} shows the learned $\alpha$ vectors for each auto-pool model.
Unlike the URBAN-SED results in \Cref{fig:urbansedalpha}, \rev{auto-pool} models do not uniformly approach max-pooling \rev{on the DCASE data}.
Instead, there is significant diversity among the different classes, with some tending toward max-pooling behavior (large $\alpha$ for \emph{screaming} or \emph{air horn/truck horn}, \emph{skateboard}) and others tending toward mean-pooling behavior (small $\alpha$ for \emph{bus} or \emph{car passing by}, \emph{truck}).
Referring to \Cref{fig:dcase:durations}, the classes for which auto-pool (and CAP) learn large $\alpha$ tend to have short event durations.
By contrast, the classes which result in small $\alpha$ values tend to span the majority of the clip, and have high concentration on full duration (1.0).
In these classes, the bag- and instance-labels are equivalent, so it is expected that mean-pooling (small $\alpha$) out-performs $\max$-pooling.

\subsection{MedleyDB results}
\begin{table}
\centering
\caption{Aggregate results on MedleyDB over 10 randomized trials\label{tab:medley}. Results which are statistically indistinguishable from the best (average) per metric (underlined) are indicated in bold.}
\begin{tabular}{lrrrrrrr}
\toprule
& \multicolumn{3}{c}{Static}
& \multicolumn{4}{c}{Dynamic}\\
Model 
		& $F_1$ & $P$ & $R$
		& $F_1$ & $P$ & $R$ & $E_\downarrow$\\
\cmidrule( r ){1-1}
\cmidrule( r ){2-4}
\cmidrule( r ){5-8}
Max           & \textbf{0.650}            & \underline{\textbf{0.605}} & 0.829                      & 0.437                      & \underline{\textbf{0.875}} & 0.292                      & 0.719 \\
Mean          & 0.550                      & 0.409                      & \textbf{0.988}             & \textbf{0.655}             & 0.594                      & \underline{\textbf{0.733}} & \textbf{0.608} \\
Soft-max      & \textbf{0.577}             & 0.444                      & \textbf{0.974}             & \textbf{0.662}             & 0.668                      & \textbf{0.658}             & \underline{\textbf{0.524}} \\
\midrule
RAP $10^{-2}$ & 0.553                      & 0.413                      & \underline{\textbf{0.989}} & \textbf{0.659}             & 0.604                      & \textbf{0.727}            & \textbf{0.593}\\
RAP $10^{-3}$ & 0.563                      & 0.425                      & \textbf{0.984}             & \underline{\textbf{0.673}} & 0.638                      & \textbf{0.714}             & \textbf{0.545} \\
RAP $10^{-4}$ & \textbf{0.623}            & \textbf{0.497}            & \textbf{0.957}             & \textbf{0.622}             & \textbf{0.757}             & 0.530                      & \textbf{0.540} \\
CAP           & \textbf{0.625}             & \textbf{0.512}             & \textbf{0.937}             & \textbf{0.609}             & \textbf{0.787}             & 0.498                      & \textbf{0.551} \\
Auto          & \underline{\textbf{0.653}} & \textbf{0.567}            & 0.888                      & 0.528                      & \textbf{0.841}             & 0.386                      & 0.636 \\
\midrule
Strong        & 0.575                      & 0.437                      & 0.982                      & 0.675                      & 0.640                      & 0.716                      & 0.540 \\
\bottomrule
\end{tabular}
\end{table}

\Cref{tab:medley} lists the class-aggregated scores over the MedleyDB dataset.
Following Dem\v{s}ar~\cite{demvsar2006statistical}, the distributions of scores over all splits are compared using a Friedman test~\cite{friedman1937use} with Bonferroni-Holm correction ($\alpha=0.05$)~\cite{holm1979simple}, and methods indistinguishable from the best (average) are indicated in bold.
The strong model is omitted from statistical comparison, as we are primarily concerned with differentiating among MIL algorithms.
From this analysis, we observe little differentiation between the various methods.
Mean-pooling and RAP ($\lambda\geq 10^{-3}$) are significantly worse than auto-pool (best) for static $F_1$ score, though still comparable to the strong model.
For dynamic prediction, only the max- and auto-pooling methods are significantly worse than RAP $\lambda=10^{-3}$, which closely matches the strong model.

\begin{figure*}
\centering
\includegraphics[width=0.8\textwidth]{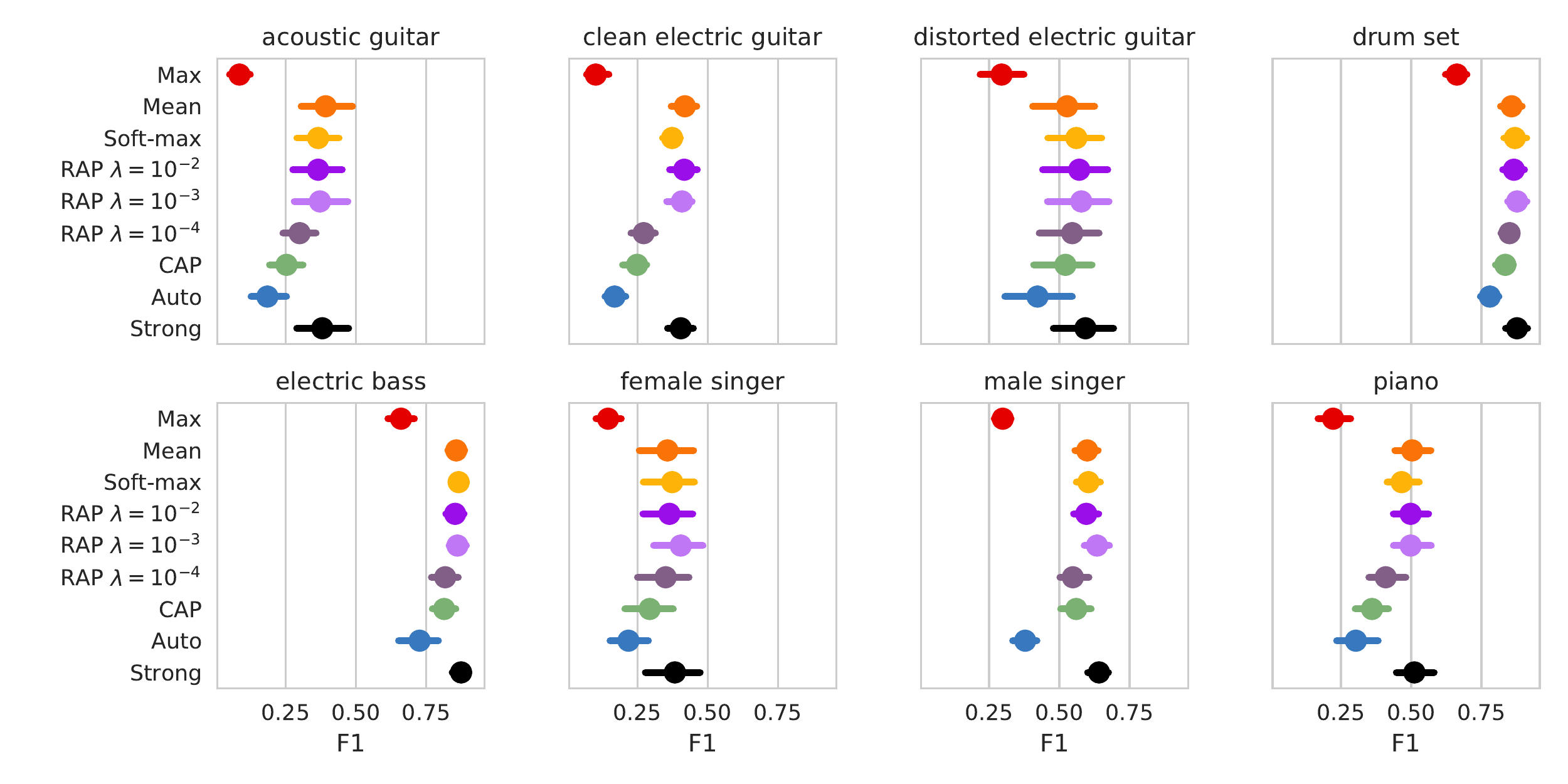}
\caption{MedleyDB results: per-class dynamic $F_1$ scores for each model under comparison, averaged over 10 randomized trials. Error bars correspond to 95\% confidence intervals under bootstrap sampling.\label{fig:medleyclass}}
\end{figure*}

An examination of the per-class results presented in \Cref{fig:medleyclass} reveals that this trend is consistent across classes.
The low performance of max-pooling exhibited on the DCASE dataset persists on MedleyDB.\@
Similarly, the auto-pool model tends to do worse than the regularized variants across all classes.
This is most likely due to the characteristics of the training data: instruments within a randomly selected excerpt tend to be either entirely active or inactive, so mean-pooling is a good approximation to strong training.
This phenomenon is illustrated in \Cref{fig:medley:durations}, which shows the distribution of labeled segment durations for each instrument.
Aside from vocalists, the average durations are well in excess of the 10~second mark (gray), which indicates that uniformly sampled patches are unlikely to catch instrument state transitions.

\section{Conclusion}

To summarize the experimental results presented above, we observe the following trends across all datasets.
First, the unconstrained, unregularized auto-pool method consistently achieves the highest scores for static prediction.
If the practitioner's goal is to classify weakly labeled excerpts without requiring more precise prediction, then auto-pool appears to be the method of choice.
However, auto-pool does exhibit a tendency to ``over-fit'' to weak annotation, in that its performance for dynamic prediction is generally \rev{lower} than the proposed alternatives, and that it favors precision over recall.

Second, the behavior of fixed pooling operators (min, max, soft-max) depends on the characteristics of the dataset and the relative duration of events in each class.
Mean-pooling performs well when events are long relative to the bag because the bag-level labels can reasonably be propagated to all instances.
Max-pooling can perform well when events are short within the bag, but it can also be unstable and difficult to train.
While auto-pooling often converges to max-like behavior, it consistently outperforms the standard max-pool model, which indicates that the improved gradient flow due to the soft-max operator is indeed beneficial for learning good representations.

\rev{%
Third, as a general observation, $\max$-pooling models tend to favor precision over recall in dynamic evaluation.
This is likely due to the fact that to optimize the objective during training, $\max$-pooling needs only to model a single instance within a bag.
This obviously suffices for static evaluation, but for dynamic evaluation, $\max$-pooling models have no incentive to model the entire duration of the source event, leading to a reduction of recall.
Similarly, the more $\max$-like the pooling operator becomes, \emph{e.g.}, RAP with small $\lambda$ or unconstrained auto-pool, the more emphasis the resulting model tends to place on precision rather than recall.
For similar reasons, strongly trained models can under-perform MIL models in static evaluation, as illustrated in \cref{tab:medley}.
MIL models can attend to specific portions of non-stationary signals (\emph{e.g.}, a vocal attack) to detect their presence, while strongly trained models attempt to solve the more difficult task of modeling the entire duration of the event.
}

\rev{%
Although not empirically studied in this work, the choice of initialization for $\alpha$ could also influence the resulting model.
Following the motivation given in \Cref{sec:methods}, we generally recommend to initialize $\alpha$ with small values (either 0 or 1)
to ensure sufficient gradient propagation early in training.}

In all datasets, the regularized auto-pool models are among the best performing, illustrating that the models are able to adapt to the characteristics of the data for a proper choice of $\lambda$.
This suggests a general recommendation for MIL event detection problems: use RAP, and tune $\lambda$ by hyper-parameter optimization over a strongly labeled validation set.

Most importantly, the proposed method is able to nearly match dynamic prediction accuracy to that obtained by training with access to instance labels.
This suggests that by framing sound event detection as a MIL problem, practitioners may be able to achieve comparable accuracy with a significant reduction in effort and cost of acquiring training labels.
\rev{Finally, although we focus on SED applications in this article, we emphasize that the proposed auto-pool operators are fully general, and could be readily applied to MIL problems in any application domain.}


%



\section*{Acknowledgment}
The authors acknowledge support from the Moore-Sloan Data Science Environment at NYU.\@
This work was partially supported by NSF awards 1544753 and 1633259, and a Google Faculty Award.
We thank NVidia Corporation for the donation of a Tesla K40 GPU.\@

\ifCLASSOPTIONcaptionsoff%
  \newpage
\fi



\bibliographystyle{IEEEtran}
\small
\bibliography{refs}
%


%

\begin{IEEEbiography}[{\includegraphics[width=1in,height=1.25in,clip,keepaspectratio]{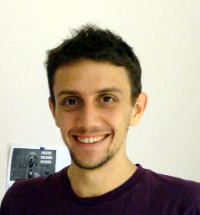}}]{Brian McFee}
is Assistant Professor of Music Technology and Data Science New York University.
He received the B.S.\ degree (2003) in Computer Science from the University of California, Santa Cruz, and M.S.\ (2008) and Ph.D.\ (2012) degrees in Computer Science and Engineering from the University of California, San Diego.
His work lies at the intersection of machine learning and audio analysis.
He is an active open source software developer, and the principal maintainer of the librosa package for audio analysis.
\end{IEEEbiography}

\begin{IEEEbiography}[{\includegraphics[width=1in,height=1.25in,clip,keepaspectratio]{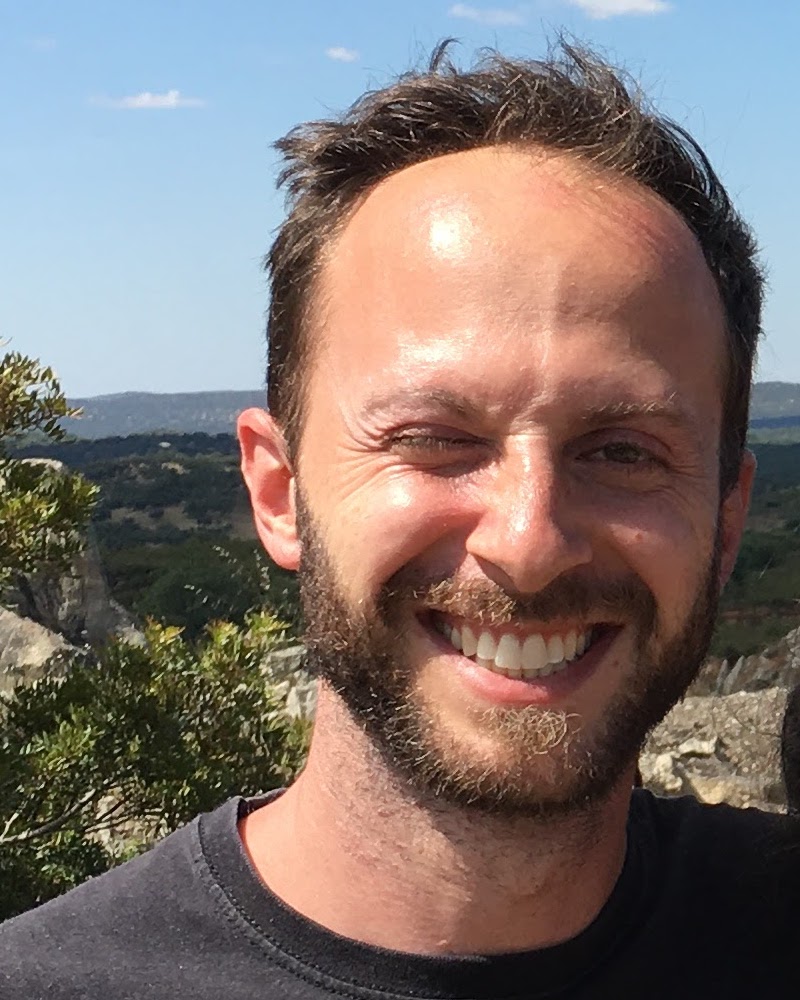}}]{Justin Salamon}
is a Senior Research Scientist at New York University's Music and Audio Research Laboratory and Center for Urban Science and Progress. He received a B.A.\ degree (2007) in Computer Science from the University of Cambridge, UK and M.Sc.\ (2008) and Ph.D.\ (2013) degrees in Computer Science from Universitat Pompeu Fabra, Barcelona, Spain. In 2011 he was a visiting researcher at IRCAM, Paris, France. In 2013 he joined  NYU as a postdoctoral researcher, where he has been a Senior Research Scientist since 2016. His research focuses on the application of signal processing and machine learning to audio signals, with applications in machine listening, music information retrieval, bioacoustics, environmental sound analysis and open source software and data.
\end{IEEEbiography}

\begin{IEEEbiography}[{\includegraphics[width=1in,height=1.25in,clip,keepaspectratio]{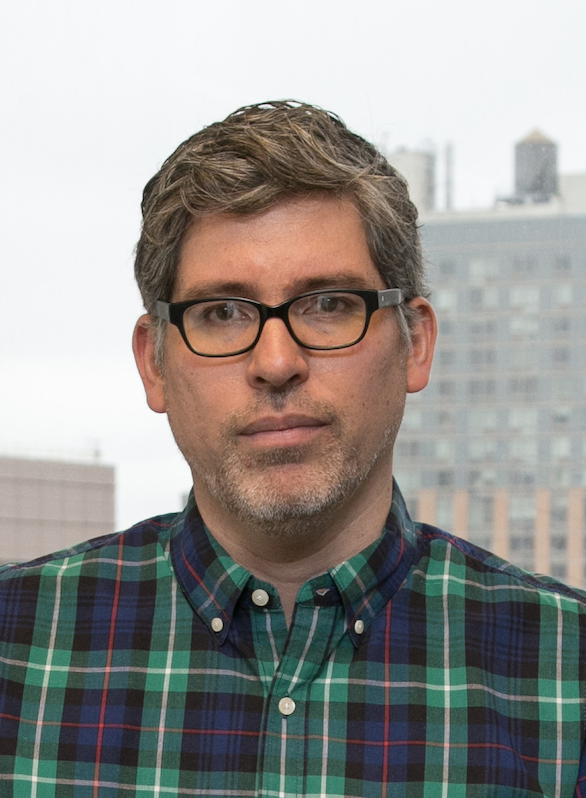}}]{Juan Pablo Bello}
    (SM’16) is Associate Professor of Music Technology and Computer Science \& Engineering at New York University. He received the BEng degree (1998) from Universidad Sim\'{o}n Bol\'{\i}var, Venezuela, and the PhD degree (2003) from Queen Mary, University of London, UK, both in Electronic Engineering. He is director of the Music and Audio Research Lab (MARL), where he leads research in digital signal processing, machine listening and music information retrieval,  topics that he teaches and in which he has published more than 100 papers and articles in books, journals and conference proceedings. His work has been supported by public and private institutions in Venezuela, the UK, and the US, including Frontier and CAREER awards from the National Science Foundation and a Fulbright scholar grant for multidisciplinary studies in France.
\end{IEEEbiography}








\end{document}